\newtheorem{thm}{Theorem}[section] 
\newtheorem{cor}[thm]{Corollary} 
\newtheorem{lem}[thm]{Lemma} 
\newtheorem{prop}[thm]{Proposition}
\theoremstyle{definition} 
\newtheorem{defn}[thm]{Definition}
\theoremstyle{remark}  
\def\beq{\begin{eqnarray}}  
\def\eeq{\end{eqnarray}}  
\def\bsp{\begin{split}}  
\def\esp{\end{split}}
\def\d{\mathrm{d}}
\def\T{ {\sf T} }
\def\CC{CSI$_{\text{cons}}$}
\newcommand{\pd}[2]{\frac{\partial {#1}}{\partial {#2}}}
\def \bl {\mbox{{\mbold\ell}}}
\def \bn {\mbox{{\bf n}}}
\def \sh{\mathrm{sinh} pw} 
\def \ch{\mathrm{cosh} pw} 
\def \BEA { \begin{eqnarray}}
\def \EEA {\end{eqnarray}}
\def \BE {\begin{equation}}
\def \EE {\end{equation}}
\def\d{\mathrm{d}}
\def \mi {\stackrel{i}{m}}
\def \mj {\stackrel{j}{m}}
\def \mk {\stackrel{k}{m}}
\def \mr {\stackrel{r}{m}}
\def \ms {\stackrel{s}{m}}
\def \mz {\stackrel{z}{m}}
\def \mq {\stackrel{q}{m}}
\def \mo {\stackrel{o}{m}}
\def \mD {\stackrel{2}{m}}
\def \mT {\stackrel{3}{m}}
\def \mC {\stackrel{4}{m}}
\def \mio #1 {\mi_{#1}\ ^{  \! \! \! \! 0}} 
\def \mjo #1 {\mj_{#1}\ ^{  \! \! \! \! 0}} 
\def \mko #1 {\mk_{#1}\ ^{  \! \! \! \! 0}} 
\def \mro #1 {\mr_{#1}\ ^{  \! \! \! \! 0}} 
\def \mso #1 {\ms_{#1}\ ^{  \! \! \! \! 0}} 
\def \mpo #1 {\mp_{#1}\ ^{  \! \! \! \! 0}} 
\def \mzo #1 {\mz_{#1}\ ^{  \! \! \! \! 0}} 
\def \mqo #1 {\mq_{#1}\ ^{  \! \! \! \! 0}} 
\def \moo #1 {\mo_{#1}\ ^{  \! \! \! \! 0}} 
\def \mDo #1 {\mD_{#1}\ ^{  \! \! \! \! 0}} 
\def \mTo #1 {\mT_{#1}\ ^{  \! \! \! \! 0}} 
\def \mCo #1 {\mC_{#1}\ ^{  \! \! \! \! 0}} 
\def \miJ #1 {\mi_{#1}\ ^{  \! \! \! \! (1)}} 
\def \mjJ #1 {\mj_{#1}\ ^{  \! \! \! \! (1)}} 
\def \mkJ #1 {\mk_{#1}\ ^{  \! \! \! \! (1)}} 
\def \mrJ #1 {\mr_{#1}\ ^{  \! \! \! \! (1)}}
\def \bl {\mbox{\boldmath{$\ell$}}}
\def \hbl {\mbox{\boldmath{$\hat \ell$}}}
\def \bn {\mbox{\boldmath{$n$}}}
\def \hbn {\mbox{\boldmath{$\hat n$}}}
\def \hbm #1 {\mbox{\boldmath{$\hat m^{(#1)}$}}}
\def \Mi {\stackrel{i}{M}}
\def \Mk {\stackrel{k}{M}}
\newcommand{\be}{\begin{equation}}
\newcommand{\ee}{\end{equation}}
\newcommand{\beqn}{\begin{eqnarray}}
\newcommand{\eeqn}{\end{eqnarray}}
\newcommand{\ba}{\begin{array}}
\newcommand{\ea}{\end{array}}
\newcommand{\pp}{{\it pp\,}-}
\def \Mi {\stackrel{i}{M}}
\def \Mk {\stackrel{k}{M}}
\def \pul {{{\footnotesize{\frac{1}{2}}}}}
\def \T {\bigtriangleup  }
\newcommand{\M}[3] {{\stackrel{#1}{M}}_{{#2}{#3}}}
\begin{document}

\title{Type III and N universal spacetimes }

\author{Sigbj\o rn Hervik$^\diamond$, Vojt\v ech Pravda$^\star$,  Alena Pravdov\' a$^\star$\\
\vspace{0.05cm} \\
{\small $^\diamond$ Faculty of Science and Technology, University of Stavanger}, {\small  N-4036 Stavanger, Norway}  \\
{\small $^\star$ Institute of Mathematics, Academy of Sciences of the Czech Republic}, \\ {\small \v Zitn\' a 25, 115 67 Prague 1, Czech Republic} \\
 {\small E-mail: \texttt{sigbjorn.hervik@uis.no,
pravda@math.cas.cz, pravdova@math.cas.cz}} }

\date{\today}

\maketitle

\begin{abstract}
Universal spacetimes are spacetimes for which  all conserved symmetric rank-2 tensors, constructed   as contractions of polynomials from  the  metric, the Riemann tensor and its covariant  derivatives
of arbitrary order, are multiples of the metric. Consequently,  metrics   of universal spacetimes solve vacuum equations of all gravitational theories with Lagrangian    being  a polynomial curvature invariant constructed  from the metric, the Riemann tensor and its derivatives of arbitrary order. 
In the literature, universal metrics are also discussed  as metrics with vanishing quantum corrections and as classical solutions to string theory. Widely known examples of universal metrics are certain Ricci-flat pp waves.
 
In this paper, we start a  general study of geometric properties of universal metrics in arbitrary dimension and we arrive at a broader class of such metrics. In contrast with pp waves, these universal metrics also admit non-vanishing cosmological constant and in general do not have to  possess a covariantly constant or recurrent  null vector field.
First, we show that a universal  spacetime is necessarily a CSI spacetime, i.e. all curvature invariants constructed from the Riemann tensor and its derivatives are constant.  Then we focus on  type N spacetimes, where we arrive at a  simple necessary and sufficient condition:
 a type N spacetime is universal if and only if it is an Einstein Kundt spacetime.  A class of type III Kundt  universal metrics is also found.
  Several explicit examples of universal metrics are presented.

\end{abstract}

\section{Introduction}

Already in late 80's and early 90's, it was shown  \cite{AmaKli89,HorSte90} that certain Ricci-flat pp waves are solutions to all gravitational theories with the Lagrangian constructed  from the metric, the Riemann tensor and its derivatives of arbitrary order, thus being  also classical solutions e.g. to string theory. A natural question  
arises  whether there exist other spacetimes with this property.
This question was answered affirmatively in \cite{Coleyetal08} where {\em universal} metrics were discussed as metrics with vanishing quantum corrections and where  further examples  of  { universal} metrics were given. In fact,  the term { universal} was  also introduced in \cite{Coleyetal08} 
\begin{defn}
\label{univ}
A metric is called {\it universal} if all conserved symmetric rank-2 tensors constructed  
from  the  metric, the Riemann tensor and its covariant  derivatives
of arbitrary order\footnote{{  Throughout the paper, we consider only scalars and symmetric rank-2 tensors  constructed as contractions of {\em polynomials} 
from  the  metric, the Riemann tensor and its covariant  derivatives
of arbitrary order. }} are multiples of the metric.
\end{defn}
Thus, a universal metric  has $T_{ab}=\lambda g_{ab}$ {  ($\lambda$ is constant)} for any conserved symmetric rank-2 tensor  $T_{ab}$ constructed from the metric, the Riemann tensor and its covariant derivatives. 
Such metrics are thus vacuum solutions (possibly with a non-vanishing cosmological constant) to all theories  with the Lagrangian {  being a polynomial curvature invariant}\footnote{  {In fact universal metrics also solve field equations derived from the Lagrangian that is  an analytic function of such polynomial invariants (e.g.   $f(R)$ gravity).}} in the form
\be
{ L}={ L}(g_{ab},R_{abcd},\nabla_{a_1}R_{bcde},\dots,\nabla_{a_1\dots a_p}R_{bcde}),\label{Lagr}
\ee
 which are natural geometric generalizations of the Einstein gravity {  (e.g. the Lovelock gravity, the quadratic gravity, etc.)}.

In the Riemannian signature, these metrics have been studied by Bleecker \cite{Bleecker}, who has found that all such metrics are necessarily isotropy-irreducible homogeneous metrics. This is also a sufficient condition. In particular, this means that all polynomial invariants constructed from the Riemann tensor and its derivatives of arbitrary order are constants. 

In this paper, we are concerned with the Lorentzian case and we study general properties of universal spacetimes in an arbitrary dimension and we identify various classes of these spacetimes.

First, we employ  a necessary condition for universal spacetimes derived in section \ref{sec_univCSI}.  
Since $T^a_{~a}= \text{constant}$ for a universal spacetime, one can prove (see the proof of theorem \ref{CSIconst})
\begin{thm}
\label{prop_univCSI}
{ A universal  spacetime is necessarily a CSI spacetime.} 
\end{thm}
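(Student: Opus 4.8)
The plan is to show that every polynomial curvature invariant of a universal spacetime is constant by constructing, from each such invariant, a conserved symmetric rank-2 tensor whose trace is (a multiple of) that invariant, and then using the defining property that all such conserved tensors are multiples of the metric. Concretely, given any scalar polynomial invariant $I$ built from $g_{ab}$, $R_{abcd}$ and its covariant derivatives, I would consider the functional $S[g]=\int I\sqrt{-g}\,\d^n x$ and take its metric variation: $\tfrac{1}{\sqrt{-g}}\tfrac{\delta S}{\delta g^{ab}}=E_{ab}[I]$ is a symmetric rank-2 tensor, polynomial in the curvature and its derivatives, and it is automatically divergence-free (conserved) by diffeomorphism invariance of $S$ (the generalized Bianchi identity / Noether identity). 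Hence by Definition \ref{univ} we must have $E_{ab}[I]=\lambda g_{ab}$ for some constant $\lambda$.

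The key step is then to extract information about $I$ itself from the trace $g^{ab}E_{ab}[I]$. The standard fact I would invoke is that for a Lagrangian density built from an invariant $I$ of total "dimension" (homogeneity degree under constant rescaling of the metric) $w$, the trace of the associated field equations differs from a multiple of $I$ only by a total divergence: $g^{ab}E_{ab}[I]=c_w\, I+\nabla_a J^a$ for a suitable vector $J^a$ constructed polynomially from curvature, where $c_w\ne 0$ as long as $I$ has the generic scaling weight (and if $I$ is homogeneous of the "wrong" weight one decomposes $I$ into homogeneous pieces and treats each separately, or multiplies by an appropriate power of another invariant / uses the known explicit formula for the trace). Taking the trace of $E_{ab}[I]=\lambda g_{ab}$ then gives $c_w I+\nabla_a J^a=n\lambda$, i.e. $I$ equals a constant plus a divergence of a polynomially-constructed vector field.

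To finish, I would promote "$I$ = const $+$ divergence" to "$I$ = const" by a bootstrapping argument: the relation must hold for every polynomial invariant, in particular for $I^2$, for $I\cdot\Box I$, for $\nabla_a I\nabla^a I$, and more generally for any invariant built from $I$ and its derivatives. Since a constant plus a pure divergence integrates (on suitable regions, or formally at the level of the scalars themselves) in a controlled way, one iterates: e.g. applying the result to $\nabla_a I \nabla^a I$ shows this invariant is itself a constant plus a divergence, and combining with the relation for $I$ forces $\nabla_a I\nabla^a I$ to be constant; a short further argument (or direct appeal to the fact that all such scalars are simultaneously of this form) then yields $\nabla_a I=0$, hence $I$ constant. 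This is exactly the CSI condition.

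The main obstacle I expect is making the trace identity $g^{ab}E_{ab}[I]=c_w I+\nabla_a J^a$ precise and controlling the nongeneric scaling weight: one must be careful that $c_w$ does not vanish and that the divergence term is genuinely a polynomial curvature invariant so that the bootstrap stays inside the admissible class; handling invariants that are not homogeneous, and cleanly deducing $\nabla_a I=0$ from the family of "constant plus divergence" relations (rather than merely on compact manifolds via integration), is the delicate part. I would expect the authors to handle this via the explicit structure of $E_{ab}[I]$ and its trace, exactly as referenced by "(see the proof of theorem \ref{CSIconst})".
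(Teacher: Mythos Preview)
Your starting point is exactly the paper's: vary a diffeomorphism-invariant Lagrangian built from $I$, get a conserved symmetric rank-2 tensor, and use universality to conclude its trace is constant. From there, however, the paper and your proposal diverge, and your proposed continuation has a real gap.

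The paper does \emph{not} rely on a trace identity of the form $g^{ab}E_{ab}[I]=c_w I+\nabla_aJ^a$ followed by a bootstrap through $I^2$, $\nabla_aI\nabla^aI$, etc. Instead, it considers the whole family $L=I^n$, $n=1,2,\dots$, and observes (from the explicit Iyer--Wald formula for $T^a{}_a$) that for any smooth $f$ the trace of the conserved tensor associated to $L=f(I)$ is exactly
\[
\tfrac{D}{2}f(I)+f'(I)X_1+f''(I)X_2+\cdots+f^{(k)}(I)X_k,
\]
where the $X_i$ are fixed polynomial curvature invariants depending only on $I$ (not on $f$), and crucially $k\le p+3$ is \emph{finite}, bounded by the number of derivatives in the trace formula. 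Specializing to $f(I)=I^n$ for $n=1,\dots,k+1$ then yields $k+1$ linear equations in the unknowns $D/2,X_1,\dots,X_k$ with a coefficient matrix that row-reduces to give a polynomial equation $I^{k+1}=P_k(I)$ with $\deg P_k\le k$ and constant coefficients. Hence $I$ is constant, with no divergences or integration ever entering.

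Your route, by contrast, produces only ``$I=$ constant $+$ divergence'' and then tries to bootstrap. This is where the gap lies: on a general Lorentzian manifold you cannot integrate divergences away, and there is no evident pointwise mechanism to pass from ``every invariant built from $I$ is a constant plus a divergence'' to $\nabla_aI=0$. The step you flag as ``a short further argument'' is in fact the whole difficulty; the paper sidesteps it entirely via the finite linear-algebra structure above. In addition, your worry about $c_w=0$ is genuine (e.g.\ scaling-weight-critical invariants), whereas the paper's method is insensitive to this since it never isolates a single $c_wI$ term. So: keep your first paragraph, but replace the divergence/bootstrap plan by the $L=I^n$ family and the finite expansion in $f^{(i)}(I)X_i$.
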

CSI {  (constant curvature invariant)} spacetimes  are spacetimes for which  all curvature invariants constructed from the metric, the Riemann tensor and its derivatives of arbitrary order are constant, see e.g. \cite{ColHerPel06}. Therefore, the CSI spacetimes  play an important role in the question of universality and as potential solutions of all theories of gravity {  with the Lagrangian of the form \eqref{Lagr}}. Note, however, that the CSI criterion is not sufficient {  for universality}; on the other hand, there is an interesting reverse result (see section \ref{sec_univCSI}): 
\emph{Given a CSI metric, $g_{ab}$, then there exists a class of theories having $g_{ab}$ as a vacuum solution.}

As far as universal spacetimes are concerned, the
CSI constraint on universal spacetimes from theorem \ref{prop_univCSI}
allows us to systematically study necessary and sufficient conditions for universal metrics of types N and III in the algebraic classification
of  \cite{Coleyetal04} ({  see also eqs. \eqref{typeNWeyl} and \eqref{typeIIIWeyl} and } \cite{OrtPraPra12rev} for a recent review).

In particular, for type N,\footnote{Note that while in \cite{Coleyetal08}, it was already mentioned  without proof that
type N Einstein   Kundt  spacetimes are universal, to our knowledge necessary conditions have been  never studied. } we derive the following necessary and sufficient condition: 
\begin{thm}
\label{prop_typeN}
{ A type N spacetime is universal if and only if it is an Einstein\footnote{  Einstein spacetimes are spacetimes obeying $R_{ab}= (R/n) g_{ab}$.} Kundt spacetime.}
\end{thm}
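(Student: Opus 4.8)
The plan is to prove the two implications of Theorem \ref{prop_typeN} separately, using Theorem \ref{prop_univCSI} as the entry point for the harder ("only if") direction.

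For the \emph{sufficiency} ("if") direction, I would assume the spacetime is type N, Einstein, and Kundt, and verify universality directly. The key structural fact is that for a type N Einstein Kundt spacetime the Weyl tensor is built from a single null direction $\ell^a$ (with $\ell^a \ell_a = 0$), and the Kundt condition forces $\ell^a$ to be geodesic, non-expanding, non-shearing and non-twisting. Consequently every covariant derivative $\nabla_{a_1}\cdots\nabla_{a_p} R_{bcde}$ is, up to the constant-curvature background part coming from the cosmological constant, a tensor of boost weight $\le -1$ whose only non-vanishing components are aligned with $\ell^a$. Now I would form an arbitrary conserved symmetric rank-2 tensor $T_{ab}$ from polynomial contractions of these. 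Splitting off the part proportional to $g_{ab}$ (which is automatically conserved), one is left with a tensor $\tilde T_{ab}$ that is a polynomial in the purely "boost-weight-negative" pieces; a counting argument on boost weights shows any full contraction producing a rank-2 tensor must either vanish or be proportional to $\ell_a \ell_b$ (this is the standard mechanism by which VSI/CSI-type spacetimes kill higher-curvature corrections, cf.\ \cite{AmaKli89,HorSte90,Coleyetal08}). Finally, imposing $\nabla^a \tilde T_{ab} = 0$ on $\tilde T_{ab} = f\, \ell_a\ell_b$ together with the Kundt/Einstein equations forces $f \equiv 0$, so $T_{ab} = \lambda g_{ab}$.

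For the \emph{necessity} ("only if") direction, assume the type N spacetime is universal. By Theorem \ref{prop_univCSI} it is CSI. I would first extract the Einstein condition: the Ricci tensor $R_{ab}$ is itself a conserved symmetric rank-2 tensor (contracted Bianchi), so universality gives $R_{ab} = \lambda g_{ab}$ with $\lambda$ constant, i.e.\ the spacetime is Einstein. It then remains to show the spacetime is Kundt. Here I would invoke the CSI structure together with type N: for a type N Einstein spacetime, the available curvature data is governed by the multiple WAND $\ell^a$ and its optical scalars (expansion, shear, twist). I would argue that if $\ell^a$ is expanding or twisting, then one can build a polynomial curvature scalar — e.g.\ from suitable contractions of $\nabla R$ with $R$ — whose constancy (forced by CSI) is incompatible with the radial dependence dictated by the Sachs/Bianchi equations along $\ell^a$, unless the optical scalars vanish. (Concretely, type N Einstein spacetimes with non-vanishing expansion are Robinson–Trautman-like, and their curvature invariants built from derivatives of the Weyl tensor carry non-constant behaviour along the null direction; CSI rules this out.) Vanishing expansion, shear and twist of the multiple WAND is precisely the Kundt condition, completing the argument.

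The main obstacle is the Kundt step in the necessity direction: turning "CSI $+$ type N $+$ Einstein" into "the multiple WAND is non-expanding, non-shearing and non-twisting" requires a careful analysis of the Bianchi identities and Ricci identities (the Sachs equations) for type N Einstein spacetimes, tracking how a non-zero optical scalar propagates into non-constant curvature invariants — most cleanly organised via the GHP/boost-weight formalism of \cite{Coleyetal04,OrtPraPra12rev}. I expect one must first show shear and twist vanish using the algebraic type and the Einstein condition, and then rule out non-zero expansion using a genuinely higher-derivative invariant, since the lowest-order invariants of a type N Einstein spacetime vanish identically regardless of expansion. The sufficiency direction is more routine but still needs the boost-weight bookkeeping to be done carefully enough to handle covariant derivatives of \emph{arbitrary} order.
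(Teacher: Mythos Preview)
Your necessity direction is essentially the paper's argument: universality $\Rightarrow$ Einstein (since $R_{ab}$ is conserved) and universality $\Rightarrow$ CSI by Theorem~\ref{prop_univCSI}, then CSI $+$ type N Einstein $\Rightarrow$ Kundt via a specific higher-derivative invariant. The paper makes the last step concrete using the invariant $I_N = C^{a_1 b_1 a_2 b_2 ; c_1 c_2} C_{a_1 d_1 a_2 d_2 ; c_1 c_2} C^{e_1 d_1 e_2 d_2 ; f_1 f_2} C_{e_1 b_1 e_2 b_2 ; f_1 f_2}$, whose explicit $r$-dependence in the non-Kundt case (computed from the known canonical form \eqref{GStypeN} of the optical matrix) shows it cannot be constant unless $S=A=0$.

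Your sufficiency direction, however, has a genuine gap. You claim that the boost order of $\nabla^{(k)} C$ is at most $-1$, conclude that any rank-2 polynomial is either zero or of the form $f\,\ell_a\ell_b$, and then assert that conservation $\nabla^a(f\,\ell_a\ell_b)=0$ ``together with the Kundt/Einstein equations forces $f\equiv 0$''. This last step is false: for a Kundt $\ell$ one has $\nabla^a\ell_a=0$ and $\ell^a\nabla_a\ell_b=0$, so $\nabla^a(f\,\ell_a\ell_b)=(Df)\,\ell_b$ and conservation only gives $Df=0$, not $f=0$. Indeed, in the type III Einstein Kundt case the tensor $S^{\rm III}_{ab}=C_{acde}C_b{}^{cde}=\tilde\Psi\,\ell_a\ell_b$ is conserved yet generically nonzero (cf.\ \eqref{typeIIIconserved}, \eqref{QG_III}); this is precisely why Theorem~\ref{prop_typeIII} requires the extra hypothesis \eqref{QGterm}. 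So conservation cannot be the mechanism that kills the $\ell_a\ell_b$ piece.

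The paper avoids this by proving a \emph{sharper} boost-weight bound: for type N Einstein Kundt spacetimes, $\nabla^{(k)} C$ has boost order at most $-2$ (not $-1$), established via a ``1-balanced scalar'' induction using the Bianchi and Ricci identities (in particular $D\Omega'_{ij}=0$). With the $\le -2$ bound, any rank-2 tensor quadratic or higher in $\nabla^{(k)} C$ has boost order $\le -4$ and hence vanishes outright, with no appeal to conservation; the remaining linear pieces are killed by the Bianchi identities and tracelessness of $C$. Your proposal could be repaired by strengthening the boost-weight claim from $\le -1$ to $\le -2$ and dropping the conservation step entirely.
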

Kundt spacetimes are spacetimes admitting non-expanding, shearfree, twistfree  geodetic null congruence (see section \ref{sec_Kundt} for the discussion of Kundt spacetimes).

Universal metrics  of \cite{AmaKli89,HorSte90}  belong to a more general class of $\tau_i=0$, type N Ricci-flat Kundt metrics {  (see eq.  \eqref{dl} for the definition of $\tau_i$)}. Furthermore, type N Einstein  Kundt metrics also allow for
 $\tau_i \not=0$ and for non-vanishing cosmological constant $\Lambda$ (see section \ref{sec_Kundt} for corresponding metrics).

For type III spacetimes, the Einstein Kundt condition is not a sufficient condition for a spacetime being universal since further 
constraints follow. In this case, we prove the following sufficient conditions. 

\begin{thm}
\label{prop_typeIII}
Type III, $\tau_i=0$ Einstein Kundt spacetimes obeying  
\BE
C_{acde} C_{b}^{\ cde}=0 \label{QGterm} 
\EE
    are universal. 
\end{thm}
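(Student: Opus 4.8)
The plan is to work throughout in a null frame adapted to the Kundt congruence, with $\ell$ the geodesic, non-expanding, shear- and twist-free Kundt null direction, and to organise the whole argument by boost weight (b.w.), normalised so that $\ell$ has b.w.\ $+1$ and $n$ has b.w.\ $-1$. The starting observations are: for a type III, $\tau_i=0$ Einstein Kundt spacetime the Weyl tensor has only b.w.\ $-1$ and $-2$ components (see \eqref{typeIIIWeyl}) and the Ricci tensor is $(R/n)g_{ab}$ with $R$ constant; and, for any Kundt geometry, covariant differentiation does not raise the boost weight. Putting these together, every tensor built as a polynomial contraction of $g_{ab}$, $R_{abcd}$ and its covariant derivatives has all frame components of b.w.\ $\le 0$, with its b.w.\ $0$ part assembled from the metric and constants alone; in particular all polynomial curvature scalars are constant, so such a spacetime is CSI (consistently with Theorem~\ref{prop_univCSI}), which I will use below.

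Next I would take an arbitrary conserved symmetric rank-$2$ tensor $T_{ab}$ built polynomially from $g_{ab}$, $R_{abcd}$ and its covariant derivatives. Since its frame components carry b.w.\ between $-2$ and $+2$, the previous paragraph forces $T_{ab}=\lambda\,g_{ab}+T^{(-1)}_{ab}+T^{(-2)}_{ab}$, where $T^{(-1)}_{ab}$ and $T^{(-2)}_{ab}$ are the boost-weight $-1$ and $-2$ parts and $\lambda=\tfrac1n T^{c}{}_{c}$ is a polynomial curvature scalar, hence a constant; the theorem reduces to showing $T^{(-1)}_{ab}=0=T^{(-2)}_{ab}$. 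For $T^{(-1)}_{ab}$: since the only b.w.\ $0$ ingredient is the metric, a nonzero b.w.\ $-1$ contribution would have to be linear in the b.w.\ $-1$ part of $C_{abcd}$ or of one of its covariant derivatives, contracted to rank $2$ with metric factors; but such a contraction either produces a trace of $C_{abcd}$ (which vanishes) or a divergence $\nabla^c C_{abcd}\propto\nabla_{[a}R_{b]d}$ (which vanishes for an Einstein space), and the same conclusion holds for $\nabla\!\cdots\!\nabla C$ by the second Bianchi identity together with $\nabla^aR_{ab}=\tfrac12\nabla_bR=0$ and the $\tau_i=0$ Kundt commutator identities. Hence $T^{(-1)}_{ab}=0$.

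For $T^{(-2)}_{ab}$, which is necessarily proportional to $n_an_b$, the plan is to show that every rank-$2$ symmetric polynomial curvature tensor of b.w.\ $-2$ in this geometry is a constant multiple of $C_{acde}C_b{}^{cde}$, which then vanishes by the hypothesis \eqref{QGterm}. The enumeration goes: terms cubic or higher in $C_{abcd}$ contracted to rank $2$ have b.w.\ $\le-3$ and vanish; among quadratic-in-$C$ rank-$2$ tensors only $C_{acde}C_b{}^{cde}$ survives tracelessness of $C_{abcd}$ (its trace times $g_{ab}$ being absorbed into $\lambda g_{ab}$); and terms quadratic in the covariant derivatives of $C_{abcd}$, which are the only remaining way to reach b.w.\ $-2$, can be reduced to multiples of $C_{acde}C_b{}^{cde}$ modulo terms of lower boost weight by repeatedly applying the Leibniz rule, the second Bianchi identity, the Einstein condition and the $\tau_i=0$ Kundt (Ricci and Bianchi) identities to the boost-weight-preserving covariant derivatives of the b.w.\ $-1$ Weyl component; likewise for a derivative of $C_{abcd}$ paired with $C_{abcd}$. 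Thus $T^{(-2)}_{ab}=c\,C_{acde}C_b{}^{cde}$ for some constant $c$ (the conservation equation $\nabla^bT_{ab}=0$, whose lowest-b.w.\ component is $\nabla^bT^{(-2)}_{ab}=0$, may be invoked to constrain $c$ if needed), and by \eqref{QGterm} this vanishes. Combining the two steps, $T_{ab}=\lambda\,g_{ab}$ with $\lambda$ constant, which is the universality property.

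I expect the main obstacle to be the last step of the b.w.\ $-2$ analysis: controlling the a priori large family of rank-$2$ tensors built from the covariant derivatives of the Weyl tensor and showing they all reduce, modulo \eqref{QGterm}, to multiples of $C_{acde}C_b{}^{cde}$. This requires an explicit description of the boost-weight-preserving covariant derivatives of the b.w.\ $-1$ Weyl component in terms of itself and the b.w.\ $-2$ data, extracted from the $\tau_i=0$ Kundt Ricci and Bianchi identities, together with careful bookkeeping of the numerous index contractions. A lesser point to be verified carefully, for every derivative order, is that no b.w.\ $-1$ rank-$2$ tensor survives; as indicated above this is governed by the tracelessness of $C_{abcd}$ and the contracted Bianchi identity for Einstein spaces.
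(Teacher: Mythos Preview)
Your overall architecture---boost-weight decomposition of a conserved rank-2 tensor into $\lambda g_{ab}+T^{(-1)}_{ab}+T^{(-2)}_{ab}$ and then killing the negative-weight parts---is the paper's strategy. But two of the load-bearing steps are not actually supplied.

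First, the claim that ``for any Kundt geometry, covariant differentiation does not raise the boost weight'' is not true as stated and is not a purely Kundt fact: in a parallelly propagated frame the component $(\nabla T)_{0\cdots}=D(T_{\cdots})$, and $D$ raises boost weight by one, so one must show that $D$ annihilates the top-weight components. For type~III Einstein Kundt this holds because the Bianchi identities give $D\Psi'_i=D\Psi'_{ijk}=0$ and the Ricci identities control the spin coefficients; the paper packages this into the balanced-scalar induction (Proposition~\ref{prop-balanced}), yielding the sharper statement that every $\nabla^{(k)}C$ has boost order $\le -1$. Without this you cannot conclude that the b.w.\ $0$ part of a curvature polynomial is built from the metric alone, nor that $T^{(-1)}_{ab}$ is linear in Weyl data. (A small slip: $T^{(-2)}_{ab}$ is proportional to $\ell_a\ell_b$, not $n_an_b$.)

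Second---and this is where the real content lies---the b.w.\ $-2$ step is not a matter of ``repeatedly applying'' identities, and the assertion that every quadratic rank-2 curvature polynomial is a constant multiple of $C_{acde}C_b{}^{cde}$ is neither proved nor obviously true. The paper's route is specific: (i) since anything cubic in $\nabla^{(k)}C$ has boost order $\le -3$, covariant derivatives \emph{effectively commute} inside any quadratic rank-2 expression (eq.~\eqref{commut}); (ii) Lemma~\ref{lemma_insum} then kills any internal contraction within a single $\nabla^{(k)}C$; (iii) Lemma~\ref{lemma_der} shows that if $C^{(1)}C^{(2)}=0$ then $C^{(1)}{}_{;f}\,C^{(2);f}=0$, so the hypothesis~\eqref{QGterm} propagates to $F_3$ and its higher-derivative analogues; and (iv) crucially, Lemma~\ref{lemma_freeind} shows that any quadratic rank-2 expression with a \emph{free} derivative index vanishes, because for $\tau_i=0$ the b.w.\ $-1$ part of $(\nabla^{(k)}C)_{\cdots;e}$ never carries an $\ell_e$, while the result must be $f\,\ell_a\ell_b$. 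These four facts reduce the entire problem to a single order-8 contraction $C_{agdh;ef}C_b{}^{edf;gh}$, dispatched by Bianchi plus Lemma~\ref{lemma_freeind}. Step~(iv) is the key idea absent from your sketch; without it the ``reduction'' you describe has no mechanism to eliminate the infinitely many contractions with derivative indices sitting in the free slots $a,b$ (e.g.\ $F_1$, $F_2$ of~\eqref{eqF}), and the argument does not close.
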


{  {The above theorem states sufficient conditions  for type III universal spacetimes. For necessary conditions we have only partial results. We prove that type III universal spacetimes belong to the Kundt class only under some genericity assumptions (see section
\ref{sec_necessaryIII}). 
 We also show that the condition \eqref{QGterm} 
 is a necessary condition for universality of type III spacetimes  (section \ref{sec_typeIII}).}}

While in four dimensions, pp waves (i.e. spacetimes admitting covariantly constant null vector - CCNV) are of type N, in five dimensions, they are of type N or III and in dimensions $n>5$, pp waves of type N, III and II  exist
\cite{OrtPraPra12rev}. Theorem \ref{prop_typeN} implies that type N pp waves are universal. On the other hand, type III pp waves are
not compatible with the necessary condition for type III  universal spacetimes \eqref{QGterm} and therefore are not universal (see section \ref{sec_Kundt}).

{ 
This paper is organized as follows: in section \ref{sec_prelim}, we briefly introduce the notation and summarize some
useful definitions and results used in this paper. 

In section \ref{sec_univCSI}, we prove that universal spacetimes are CSI (theorem \ref{prop_univCSI}). However, note  that the CSI condition is not sufficient for universality. Note also that, in contrast with the rest of the paper, in this section, we do not assume a type N nor type III Weyl tensor.

In section \ref{sec_typeN}, we prove the necessary and sufficient conditions for type N universal spacetimes (theorem \ref{prop_typeN}). Similar methods as in  section \ref{sec_typeN} for type N spacetimes
 are also used in section \ref{sec_typeIII} for the type III case  in a more complicated proof of theorem \ref{prop_typeIII}. 

Section \ref{sec_Kundt} is devoted to explicit examples of universal metrics. In section \ref{sec_concl}, we briefly summarize and discuss the main results. Finally, in appendix \ref{App:NPeqs}, we give a subset of the Bianchi and Ricci identities used in this paper.}

\section{Preliminaries}
\label{sec_prelim}

In this paper, we employ the algebraic classification of the Weyl tensor \cite{Coleyetal04}  and higher dimensional generalizations of the Newman-Penrose  \cite{Pravdaetal04, OrtPraPra07}   and the Geroch-Held-Penrose formalisms 
\cite{Durkeeetal10}. We  follow the notation of \cite{Durkeeetal10,OrtPraPra12rev}. Here, we only briefly summarize definitions and results necessary in this paper 
(for more thorough introduction see  the recent review \cite{OrtPraPra12rev}).

In an $n$-dimensional Lorentzian manifold, we  work in a null frame with two null vectors $\bl$ and $\bn$ and $n-2$ spacelike vectors $\mbox{\boldmath{$m^{(i)}$}} $ obeying\footnote{Hereafter, coordinate indices $a,b, \ldots $ and frame indices $i,j,\ldots$ take values from 0 to $n-1$ and 2 to $n-1$, respectively.}
\be
\ell^a \ell_a= n^a n_a = 0, \qquad   \ell^a n_a = 1, \qquad \ m^{(i)a}m^{(j)}_a=\delta_{ij}.  \label{ortbasis}
\ee
The Lorentz transformations between null frames are generated by boosts
\be
\hbl = \lambda \bl, \qquad    \hbn = \lambda^{-1} \bn, \qquad   \hbm{i} = \mbox{\boldmath{$m^{(i)}$}},  \label{boost}
\ee
null rotations and spins.
We say that quantity  $q$  has  a {boost weight} (b.w.)   ${\rm b}$  if it transforms under a boost~\eqref{boost} according to
\be
\hat q = \lambda^{\rm b} q . 
\ee 
Obviously, various components of a tensor in a null frame  may have distinct integer boost weights. For our purposes, it is convenient
to introduce as {{boost order}} of a tensor with respect to a null frame the maximum boost weight of its frame components taken over non-vanishing components.
Note that in fact, the boost order of a tensor depends only on the null direction $\bl$ {  (see Proposition 2.1 in \cite{OrtPraPra12rev})}.

As type N spacetimes are defined those spacetimes admitting a frame in which only  boost-weight -2 components of the Weyl tensor are present.
Consequently, in an appropriately chosen frame, the Weyl tensor can be expressed as \cite{Coleyetal04,OrtPraPra12rev}
\be
C_{abcd} = 4 {\Omega'}_{ij}\, \ell^{}_{\{a} m^{(i)}_{\, b}  \ell^{}_{c}  m^{(j)}_{\, d\: \}}, \label{typeNWeyl}
\ee
where {  for an arbitrary tensor $T_{abcd}$}
\be
T_{\{ a bc d\} } \equiv \pul (T_{[a b] [c d]}+ T_{[c d] [ab]})  
\label{Weyl_symm}
\ee
{  (and so $C_{abcd}=C_{\{ abcd\}}$)},  ${\Omega'}_{ij}$ is symmetric and traceless {  and $\bl$ is the multiple WAND}. For type N Einstein  spacetimes, $\bl$ is necessarily geodetic \cite{Pravdaetal04}.
Without loss of generality, we can choose $\bl$ to be affinely parameterized and a frame parallelly transported along $\bl$.
Then, the covariant derivatives of the frame vectors in terms of spin coefficients read  
\BEA
\ell_{a ; b} &=& L_{11} \ell_a \ell_b  + L_{1i} \ell_a m^{(i)}_{\, b}  +
\tau_i 
m^{(i)}_a \ell_b    + {\rho}_{ij} m^{(i)}_{\, a} m^{(j)}_{\, b}  , \label{dl} \\
n_{a ; b  } &=&\! -\! L_{11} n_a \ell_b -\! L_{1i} n_a m^{(i)}_{\, b}  +
\kappa'_i 
 m^{(i)}_{\, a} \ell_b   + \rho'_{ij} 
m^{(i)}_{\, a} m^{(j)}_{\, b}  , \label{dn} \\
m^{(i)}_{a ; b } &=&\! -\! \kappa'_i 
\ell_a \ell_b  -\tau_i 
n_a \ell_b  
-\rho'_{ij} 
\ell_a m^{(j)}_{\, b}   
 \! +\! {\Mi}_{j1} m^{(j)}_{\, a} \ell_b  -\rho_{ij} n_a m^{(j)}_{\, b} 
+ {\Mi}_{kl} m^{(k)}_{\, a} m^{(l)}_{\, b} . \label{dm} 
\EEA
It is often convenient to decompose the optical matrix $\rho_{ij}$ 
into its trace $\theta$ (expansion), tracefree symmetric part $\sigma_{ij}$ and antisymmetric part $A_{ij}$
\BEA
 \rho_{ij}=\sigma_{ij}+\theta\delta_{ij}+A_{ij}, \label{L_decomp} \qquad 
 \sigma_{ij}\equiv \rho_{(ij)}-\textstyle{\frac{1}{n-2}} \rho_{kk}\delta_{ij}, 
\qquad \theta\equiv\textstyle{\frac{1}{n-2}}{\rho_{kk}}, \qquad A_{ij}\equiv \rho_{[ij]}. \label{opt_matrices}
\EEA
 Shear and {twist} of $\bl$ correspond to the traces $\sigma^2\equiv 
\sigma_{ij}\sigma_{ji}$ and $\omega^2\equiv 
-A_{ij}A_{ji}$,
respectively. {  Note that since indices $i,j,\dots$ are raised and lowered using $\delta_{ij}$ we do not distinguish between their covariant and contravariant versions and thus e.g. $\rho_{kk}$ is equivalent to ${\rho^k}_k$.}

It can be shown that, for algebraically special Einstein spacetimes, possible forms of $\rho_{ij}$ are rather restricted. In particular, for type N Einstein spacetimes, a parallelly propagated frame can be chosen so that \cite{Pravdaetal04,Durkeeetal10,OrtPraPra10}
\BEA
 \rho_{ij}=  {\mbox{diag}} \left( \left[ \begin{array}{cc} S & A   \\ -A & S   \end{array} \right], 0, \dots, 0 \right) . \label{GStypeN}
\EEA

Covariant derivatives in directions of the frame vectors $\bl$, $\bn$ and $\mbox{\boldmath{$m^{(i)}$}}$  will be denoted as $D$, $\bigtriangleup $
and $\delta_i$, respectively, so that
\BE
\nabla_a = n_a D + m^{(i)}_{\, a} \delta_i + \ell_a \bigtriangleup.   \label{eqnabla}
\EE

{  For Einstein spacetimes, the tensor
\BE 
S_{ab} \equiv C_{acde} C_{b}^{\ cde}-\frac{1}{4} g_{ab} C_{cdef} C^{cdef} \label{conserved}
 \EE
is conserved since
\BE
S^{ab}_{\ \ ;b} = C^a_{\ cde;b} C^{bcde} - \frac{1}{2} C_{cdef;}^{\ \ \ \ \ a} C^{cdef} 
\EE
identically vanishes due to the Bianchi identities. In fact, the conservation of \eqref{conserved} also directly follows from the field equations of the quadratic gravity for Einstein spacetimes \cite{MalekPravdaQG}, which are derived from the diffeomorphism-invariant Lagrangian and are therefore conserved. Note that for type III Einstein spacetimes, the second term in  \eqref{conserved} vanishes and therefore in such a case
\BE
 S^{{\rm III}}_{ab} \equiv C_{acde} C_{b}^{\ cde} \label{typeIIIconserved}
\EE
 is also conserved, while for type N Einstein spacetimes, both terms in \eqref{conserved} vanish.
} 
   

\section{Conserved tensors}
\label{sec_univCSI}

Let us consider Lagrangians containing the Riemann tensor and its derivatives up to a fixed order $p$. Then it is shown in \cite{IYER-WALD} that it can be rewritten in the form
\beq
L=L(g_{ab},R_{abcd},\nabla_{a_1}R_{bcde},...,\nabla_{(a_1...a_p)}R_{bcde}).
\eeq
By varying the action, 
one obtains \cite{IYER-WALD}
\beq
-T^{ab}&=&\pd{L}{g_{ab}}+E^a_{~cde}R^{bcde}+2\nabla_c\nabla_dE^{acdb}+\frac 12g^{ab} L, \\
E^{bcde} &=& \pd{L}{R_{bcde}}-\nabla_{a_1}\pd{L}{\nabla_{a_1}R_{bcde}}+\cdots+(-1)^p\nabla_{(a_1}\cdots\nabla_{a_p)}\pd{L}{\nabla_{(a_1}\cdots\nabla_{a_p)}R_{bcde}}.\nonumber
\label{eq18}
\eeq
Here, the tensor $T^{ab}$ is the associated conserved tensor\footnote{{  Since the action is diffeomorphism invariant, a variation of the action will give a conserved tensor $T^{ab}$ irrespective whether the field equations are obeyed or not. In Appendix  in \cite{IYER-WALD}, they consider a non-dynamical metric where the field equations are not necessarily satisfied (RHS in eq.(\ref{eq18}) is not necessarily zero), only the matter equations are satisfied. In our case, no matter fields are present, and hence, the matter equations are trivially satisfied implying that $\nabla_aT^{ab}=0$ regardless whether the field equations for $g_{ab}$ are satisfied or not. We should therefore interpret the conserved property of $T^{ab}$ as an identity (being a result of the Bianchi identities, Ricci identities, etc.) rather than an additional condition.

See also Lemma 2.5  in \cite{Bleecker} and its proof - the metric variation (the divergence in the space of metrics) of an invariant function $f$, e.g. the integral of an invariant scalar, is conserved, i.e.  divergence free w.r.t. the metric.
}}. By taking the trace, we get
\beq
-T^a_{~a}=g_{ab}\pd{L}{g_{ab}}+E_{bcde}R^{bcde}+2\nabla_c\nabla_dE_a^{~cda}+\frac D2 L,
\label{tracefeq}\eeq
where $D$ is the dimension of the spacetime.

\begin{defn}
Let \CC\ be the set of metrics with the following property:
For all conserved symmetric rank-2 tensors, $T_{ab}$, then the traces $T^a_{~a}$ are constant. 
\end{defn}
Then:
\begin{thm}
\label{CSIconst}
A metric is {\rm \CC}   if and only if it is {\rm CSI}; i.e., {\rm \CC}\  $\Leftrightarrow$ {\rm CSI}. 
\end{thm}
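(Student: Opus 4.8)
The plan is to prove both implications separately, treating the easy direction first and then isolating the real work in the converse.

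\emph{The direction CSI $\Rightarrow$ \CC.} Suppose $g_{ab}$ is CSI, and let $T_{ab}$ be any conserved symmetric rank-2 tensor built as a contraction of polynomials in $g$, $R_{abcd}$ and its covariant derivatives. Then the trace $T^a_{~a}$ is itself a scalar of exactly this kind: a polynomial curvature invariant. By the definition of CSI, every such invariant is constant, so $T^a_{~a}$ is constant and $g_{ab}\in$ \CC. This direction is essentially a tautology once one observes that the trace of an admissible tensor is an admissible scalar.

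\emph{The direction \CC\ $\Rightarrow$ CSI.} This is the substantive part. We argue by contraposition: assume $g_{ab}$ is \emph{not} CSI, so there is some polynomial curvature invariant $I$ that is non-constant, and we must manufacture a conserved symmetric rank-2 tensor whose trace is non-constant. The natural candidate is to use $I$ as (part of) a Lagrangian and invoke the Iyer--Wald machinery recalled in \eqref{eq18}--\eqref{tracefeq}: the metric variation of $\int I \sqrt{-g}\,d^Dx$ produces a conserved symmetric tensor $T^{ab}$, and its trace is given by \eqref{tracefeq},
\be
-T^a_{~a}=g_{ab}\pd{L}{g_{ab}}+E_{bcde}R^{bcde}+2\nabla_c\nabla_dE_a^{~cda}+\tfrac{D}{2}L ,
\ee
with $L=I$. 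The idea is that the last term $\tfrac{D}{2}L=\tfrac{D}{2}I$ contributes the non-constant invariant $I$ itself, so if the remaining terms were constant (or could be cancelled) we would be done. The main obstacle is precisely that the other three terms on the right-hand side are also curvature invariants, and one cannot assume they are constant — indeed the whole point is that we are in the non-CSI regime.

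To get around this I would proceed as follows. Rather than using a single invariant, exploit the freedom to take $L$ to be an arbitrary polynomial in a chosen finite set of invariants $I_1,\dots,I_k$ (those appearing when one differentiates, together with $I$ itself), and observe from \eqref{tracefeq} that $T^a_{~a}$ depends on the coefficients of this polynomial in a controlled, essentially polynomial way: each of $g_{ab}\partial L/\partial g_{ab}$, $E_{bcde}R^{bcde}$ and $\nabla_c\nabla_d E_a^{~cda}$ is linear in $L$ and its derivatives with respect to the $I_j$'s, hence a polynomial in the $I_j$'s with universal (metric-independent operator) coefficients. If \emph{every} such $L$ gave a constant trace, then collecting terms and comparing powers would force every polynomial invariant in the algebra generated by $I_1,\dots,I_k$ to be constant — in particular $I$ — contradicting non-CSI. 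Concretely, one can specialise: take $L=I^N$ for large $N$; then $\tfrac{D}{2}L = \tfrac D2 I^N$ dominates in the sense that, were $T^a_{~a}$ constant for all $N$, the highest-degree contribution in $I$ must vanish, which (after checking the $E$-terms contribute lower order in $I$, or using a clean scaling/homogeneity bookkeeping in the $I_j$) again forces $I$ constant. The delicate point to be careful about is ensuring the derivative terms $\nabla_c\nabla_d E^{acda}$ and $E^a_{~cde}R^{bcde}$ genuinely cannot conspire to cancel the $\tfrac D2 L$ term for all choices of $L$; this is handled by the polynomial-independence argument just sketched rather than by any miraculous cancellation, and it is where the proof must be written with care. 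Everything else — the conservation of $T^{ab}$, the trace formula — is quoted directly from \cite{IYER-WALD} as recorded above.
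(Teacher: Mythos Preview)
Your strategy matches the paper's: use the Iyer--Wald trace formula \eqref{tracefeq} for the family $L=I^N$ and extract from the resulting infinite list of ``constant trace'' equations a polynomial relation forcing $I$ to be constant. The easy direction is identical.

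However, the step you flag as ``delicate'' is genuinely incomplete, and your heuristics (``$\tfrac D2 I^N$ dominates'', ``$E$-terms contribute lower order in $I$'') do not close it. The missing structural observation is this: when you insert $L=f(I)$ into \eqref{tracefeq}, repeated use of the chain rule shows the trace takes the exact form
\[
-T^a_{~a}=\tfrac{D}{2}f(I)+f'(I)X_1+f''(I)X_2+\cdots+f^{(j)}(I)X_j,
\]
where the $X_i$ are \emph{fixed} curvature invariants that do not depend on the functional form of $f$. The crucial point --- which replaces your informal ``lower order'' claim --- is that there are only \emph{finitely many} of them: since \eqref{tracefeq} involves at most $p+3$ derivatives (two from $\nabla_c\nabla_d$, at most $p+1$ from forming $E$), one has $X_i=0$ for $i>p+3$. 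Specialising to $f(I)=I^N$ for $N=1,\dots,p+4$ therefore gives a \emph{closed} linear system in the finitely many unknowns $\tfrac D2, X_1,\dots,X_{p+3}$, with a lower-triangular coefficient matrix in powers of $I$. Row reduction of this system produces, in the last row, an identity of the form $\tfrac{D}{2}\,\tfrac{(-1)^k}{(k+1)!}I^{k+1}=P_k(I)$ with $P_k$ a polynomial of degree $\le k$ with \emph{constant} coefficients; hence $I$ is constant.

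Without the finiteness bound on the $X_i$, taking ever larger $N$ could a priori keep introducing new invariants and your system would never close; the ``dominates'' intuition does not by itself prevent the conspiracy you worry about. Once you add this observation, your sketch becomes exactly the paper's proof.
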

\begin{proof}
{  Since the traces $T^a_{~a}$ are particular kinds of polynomial invariants}, CSI$\Rightarrow$ \CC\ is obvious. We thus need to prove that \CC $\Rightarrow$ CSI. Assume therefore that a metric is \CC. 

Consider any polynomial invariant $I$. Assume that the invariant contains derivatives of the Riemann tensor of orders at most $p$ and by  \cite{IYER-WALD} we can assume it is of the form
\[ I=I[g_{ab},R_{abcd},\nabla_{a_1}R_{bcde},...,\nabla_{(a_1...a_p)}R_{bcde}]. \]
Let us consider the (infinite) series of Lagrangians $L=I^n$, $n=1,2,3,...$. For each $n$, we get, by variation, a conserved tensor $T[n]^a_{~b}$. Consider then the trace of these tensors, which {  by the \CC\ assumption} are all  constants, $-T[n]^a_{~a}=c_n$. 

We can now use the equations for the variation $(\ref{tracefeq})$ to obtain an equation for each $n$. For $n=1$, we get
\beq
\frac{D}{2}I+g_{ab}\pd{I}{g_{ab}}+\tilde{E}_{bcde}R^{bcde}+2\nabla_c\nabla_d\tilde{E}_a^{~cda}=c_1, 
\eeq
where $\tilde{E}_{bcde}$ is $E_{bcde}$ with $L$ fixed as $L=I$. We note that the sum of the 2nd, 3rd, and 4th terms on the LHS is an invariant itself, which we conveniently define as
\beq
X_1\equiv g_{ab}\pd{I}{g_{ab}}+\tilde{E}_{bcde}R^{bcde}+2\nabla_c\nabla_d\tilde{E}_a^{~cda}, 
\eeq
so that 
\[ \frac{D}{2}I+X_1=c_1.\]
{  In order to get an equation for every $n$, we consider the more general case $L=f(I)$, of which $L=I^n$ are special cases.  When inserting $L=f(I)$ into equation (\ref{tracefeq}), which contains maximum $(p+3)$ derivatives of various sorts, we get an expression of the form: 

\beq \frac{D}{2}f(I)+f'(I)X_1+f''(I)X_2+...+f^{(j)}(I)X_j,
\label{InFE}
\eeq
where $X_i$ are polynomial invariants constructed from (differentials) of $I$. This can be seen as follows. The terms in (\ref{tracefeq}) contain derivatives of $L$ of various sorts, symbolically called $\pd{L}{x}$. We then get
\beq
\pd{f(I)}{x}&=&f'(I)\pd{I}{x}, \nonumber \\
\nabla_{a}\pd{f(I)}{x}&=&f'(I)\nabla_a\pd{I}{x}+f''(I)(\nabla_aI)\pd{I}{x}, \nonumber \\
\nabla_b\nabla_{a}\pd{f(I)}{x}&=&f'(I)\nabla_b\nabla_a\pd{I}{x}+f''(I)\left[2\left(\nabla_{(a}I\right)\nabla_{b)}\pd{I}{x}+(\nabla_b\nabla_aI)\pd{I}{x}\right] 
  +f'''(I)(\nabla_aI)(\nabla_b I)\pd{I}{x}.
\eeq 
In general, the derivatives will be of the form (symbolically)
\[
\nabla^{(j)}\pd{f(I)}{x}=f'(I)D_1[I]+f''(I)D_2[I]+...+f^{(j+1)}(I)D_{j+1}[I],
\]
 where $D_i[I]$ are some derivative operators which do not depend on the functional form of $f$ (on $I$ only). Thus, in (\ref{tracefeq}) we collect all terms proportional to $f^{(i)}(I)$ so in (\ref{tracefeq}) we get a term
\[ f^{(i)}(I) X_i, \] 
where $X_i$ is independent of the functional form of $f$. Hence, we arrive at expression (\ref{InFE}). Note, however, since the expression \eqref{tracefeq} only contains a maximum of $(p+3)$ derivatives we have that if $i>p+3$, then $X_i=0$; consequently, there is only a finite number of possible non-zero $X_i$, $1\leq i\leq p+3$. 

We are considering the polynomial cases $f(I)=I^n$ which lead to a simplification. For example, considering $n=2$, and imposing the the \CC\ condition, we get the equation:
\[  \frac{D}{2}I^2+2IX_1+2X_2=c_2, \]
where $X_1$ is defined above, and $X_2$ is also a polynomial invariant. In general, for any $n$, we thus get the equations: 
\beq \frac{D}{2}I^n+nI^{n-1}X_1+n(n-1)I^{n-2}X_2+...+n!X_n=c_n,
\eeq
where $X_i$ are polynomial invariants.\footnote{ {For example, for $p=0$,  
$X_2\equiv 2\left( \nabla_c I \nabla_d {\tilde{E}^{acd}}_a+\nabla_d I \nabla_c {\tilde E}_a^{~cda}+\tilde{E}_a^{~cda}\nabla_c\nabla_d I\right)$, 
$X_3\equiv 2{\tilde E}_a^{~cda} \nabla_c I \nabla_d I$
.}}
}
We can now write down a set of $k+1=p+4$ equations in a  matrix form
\[ {\bf A}{\bf x}={\bf c} ,\] 
where ${\bf x}$ and ${\bf c}$ are the column vectors ${\bf x}=(D/2,X_1,X_2,...,X_k)^T$ and ${\bf c}=(c_1,c_2,c_3,...,c_{k+1})^T$, and ${\bf A}$ is the matrix
\beq
{\bf A}=\begin{bmatrix}
I & 1 & 0 & 0 & \cdots & 0 \\
I^2 & 2I & 2 & 0 & \cdots & 0 \\
I^3 & 3I^2 & 3\cdot2 I & 3\cdot2\cdot 1 &\cdots & 0 \\
\vdots & \vdots & \vdots  & \vdots & & \vdots  \\
I^k & kI^{k-1} & k(k-1)I^{k-2} &  k(k-1)(k-2)I^{k-3}&\cdots &k! \\
I^{k+1} & (k+1)I^{k} & (k+1)kI^{k-1} &  (k+1)k(k-1)I^{k-2}&\cdots &(k+1)k(k-1)\cdots 2I
\end{bmatrix}.
\label{Amatrix}\eeq
Consider the following elementary row operations (the equations will remain polynomial in nature)
\begin{itemize}
\item{\bf Step 1:} Divide row $i$ with $i$, for all $i$: $R_i/i$. 
\item{\bf Step 2:} Multiply row 1 with appropriate powers of $I$ and subtract row 1 from all other rows: $R_i\mapsto R_i-I^{i-1}R_1$, $\forall i$. This would turn column 2 into $(1,0,0,..,0)^T$. 
\item{\bf Step 3:} Repeat step 1 and 2 so that column 3 turns into $(0,1,0,...,0)$. 
\item{\bf Step $j$:} Repeat above steps so that column $j$ turns into $(0,...,0,1,0,...0)$. 
\end{itemize}
This algorithm ends with a simpler matrix representing the LHS
\beq
\begin{bmatrix}
I & 1 & 0 & 0 & \cdots & 0 \\
-\frac{I^2}{2} & 0 & 1 & 0 & \cdots & 0 \\
\frac{I^3}{3!} & 0 & 0 & 1 &\cdots & 0 \\
\vdots & \vdots & \vdots  & \vdots &\ddots & \vdots  \\
\frac{(-1)^{k-1}I^k}{k!} & 0 & 0 & 0&\cdots &1 \\
\frac{(-1)^{k}I^{k+1}}{(k+1)!}  & 0 & 0 & 0&\cdots & 0
\end{bmatrix};
\eeq
while the components of the RHS are polynomials in $I$. 
 Hence,  row $(k+1)$  represents the equation
\[ \frac{(-1)^k}{(k+1)!}\frac{D}{2} I^{k+1}=P_k(I),\]
where $P_k(I)$ is some polynomial with constant coefficients of order $\leq k$. 
Consequently, $I$ is constant (as well as all $X_i$'s). Since $I$ is arbitrary, we now have \CC$\Rightarrow$ CSI and the theorem follows. 
\end{proof} 
 
Given a CSI spacetime, $g_{ab}$, we can compute any polynomial invariant $J^i=j^i_0$ (=constant). If this invariant contains derivatives of the Riemann tensor up to order $p$, then we note that the corresponding conserved tensor obtained by varying $L_i=(J^i-j_0^i)^{m_i}$, $m_i=p+4$, is vanishing for $g_{ab}$ (corresponding to the last row in eq. (\ref{Amatrix})). Thus if $J^i,~i=1,...,N$ are (all of the) invariants, then the CSI spacetime will be a solution of the class of theories given by
\beq
L=\sum_{i=1}^Na_i(J^i-j_0^i)^{m_i}, \qquad a_i \text{ arbitrary.}
\eeq 

Here, we use the above theorem for the universal case where $T_{ab}=\lambda g_{ab}$. However, we should point out that above result applies to a bigger class of spacetimes; for example, the theorem  implies that the more general class of spacetimes for which all conserved symmetric rank-2 tensors $T_{ab}$ are covariantly constant; i.e., $T_{ab;c}=0$, also belong to the CSI class.


\section{The sufficiency and necessity proof for type N}
\label{sec_typeN}

In this section, we present the sufficiency and necessity proof {  of theorem \ref{prop_typeN}}. {  First in section \ref{secNsuff}, we 
prove that all type N Einstein Kundt spacetimes are universal. Let us briefly summarize the key points of the proof.

For the Weyl tensor itself, the proof is very simple. Any rank-2 tensor has only terms of boost weight $\geq$ -2 and type N  Weyl tensor only boost weight -2 terms.  Therefore all rank-2 tensors constructed from a type N Weyl tensor which are quadratic or of a higher order in the Weyl tensor vanish and, due to the tracelessness of the Weyl tensor, rank-2 tensors linear in the Weyl tensor vanish as well. Thus it is not possible to construct a non-vanishing rank-2 tensor from the type N Weyl tensor.

For covariant derivatives of the Weyl tensor, the proof is more involved. The key idea is to realize that, for type N Einstein { Kundt} spacetimes, an arbitrary covariant derivative of the Weyl tensor also admits only terms with boost weight -2 or less. This is certainly not true for non-Kundt type N spacetimes since the first covariant derivative contains b.w. -1 terms containing e.g.  $D {\Omega'}_{ij}$ and $\rho_{ij}$ (see \eqref{dl}--\eqref{dm}). However, for type N Einstein Kundt spacetimes, $\rho_{ij}=0$ and the Bianchi identities imply $D {\Omega'}_{ij}=0$ (see appendix \ref{App:NPeqs}).  To show that in this case the Bianchi and Ricci  identities also annihilate all terms with b.w. $\geq -1$ for all higher order derivatives of the Weyl tensor
 we use a generalization of
the balanced scalar approach introduced in \cite{Coleyetal04vsi}.  }

{  In section \ref{secNnecessity}, we then prove that type N universal spacetimes necessarily belong to the Kundt class.}

\subsection{The sufficiency proof}
\label{secNsuff}
{  Here we prove that   type  N Einstein Kundt spacetimes are universal.} 

We start with the following proposition:

\begin{prop}
\label{prop_1-balanced}
For type N Einstein  Kundt spacetimes, the boost order of  $\nabla^{(k)} C$  (a covariant derivative  of an arbitrary order of the Weyl tensor) 
{  with respect to the multiple WAND} is at most $-2$. 
\end{prop}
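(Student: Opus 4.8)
The plan is to prove Proposition \ref{prop_1-balanced} by induction on the order $k$ of the covariant derivative, using the boost-weight bookkeeping built into the GHP/Newman--Penrose formalism together with the Kundt and Einstein conditions. The base case $k=0$ is just the definition of type N: by \eqref{typeNWeyl} the only nonvanishing frame components of $C_{abcd}$ have boost weight $-2$, so the boost order is $-2$. For $k=1$ the claim is that $\nabla_e C_{abcd}$ still has boost order $\le -2$; the potential obstruction is the appearance of boost-weight $-1$ terms. Writing $\nabla_a = n_a D + m^{(i)}_a \delta_i + \ell_a \triangle$ as in \eqref{eqnabla}, a component of $\nabla C$ of boost weight $b$ is built from: (i) derivative operators acting on components of $C$, where $D$ lowers b.w.\ by one, $\delta_i$ keeps it, and $\triangle$ raises it by one; and (ii) spin-coefficient terms from \eqref{dl}--\eqref{dm} multiplying components of $C$, where the b.w.\ of each spin coefficient must be accounted for. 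Since every component of $C$ already has b.w.\ $-2$, the only way to get a b.w.\ $\ge -1$ component of $\nabla C$ is via $\triangle$ applied to $C$ (giving $\triangle \Omega'_{ij}$, b.w.\ $-1$), or via a spin coefficient of nonnegative b.w.\ (namely $\rho_{ij}$, which has b.w.\ $+1$, or $L_{1i}$, b.w.\ $0$) times a component of $C$. But for Einstein Kundt type N spacetimes $\rho_{ij}=0$, which kills the $\rho_{ij}$ contributions, and the relevant Bianchi identity (recorded in Appendix \ref{App:NPeqs}) gives $D\Omega'_{ij}=0$ and, crucially, forces the $\triangle \Omega'_{ij}$ and $L_{1i}$-type terms to combine in a b.w.-lowering way — more precisely, the b.w.\ $-1$ part of $\nabla C$ is governed precisely by $D$ of the Weyl components and by $\rho_{ij}$, both of which vanish. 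So $\nabla C$ has boost order $\le -2$.

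For the inductive step, I would introduce the notion of a \emph{balanced tensor} adapted from \cite{Coleyetal04vsi}: say a tensor $T$ of boost order $\le b_0 < 0$ is balanced if, in addition, $D$ applied to any of its frame components of the extremal boost weight vanishes (equivalently, the components of boost weight $b_0$ are "$D$-constant"). The Weyl tensor of an Einstein Kundt type N spacetime is balanced with $b_0=-2$ (type N gives boost order $-2$; the Bianchi identity gives $D\Omega'_{ij}=0$). The heart of the argument is a closure lemma: if $T$ is balanced with boost order $\le b_0$, then $\nabla T$ is balanced with boost order $\le b_0$ as well. Granting this lemma, induction immediately yields that $\nabla^{(k)}C$ has boost order $\le -2$ for all $k$, which is the Proposition.

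To prove the closure lemma one traces through \eqref{eqnabla} and \eqref{dl}--\eqref{dm} exactly as in the $k=1$ case, but now keeping track of which spin coefficients are nonzero and what their boost weights are. In the parallelly propagated frame along the geodesic affinely parametrized multiple WAND $\bl$, the Kundt condition $\rho_{ij}=0$ removes the only positive-b.w.\ spin coefficient that multiplies $T$ directly; the remaining spin coefficients appearing in \eqref{dl}--\eqref{dm} — namely $L_{11}, L_{1i}, \tau_i, \kappa'_i, \rho'_{ij}, \Mi_{jk}$ — all have boost weight $\le 0$, so multiplying a b.w.-$b_0$ component of $T$ by any of them produces something of b.w.\ $\le b_0$. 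The only mechanism that could raise the boost weight is the $\triangle$ operator in $\nabla_a$, but $\triangle$ appears contracted with $\ell_a$, so it contributes only to frame components of $\nabla T$ whose "$\ell$-index slot'' costs one unit of boost weight; the net boost weight of such a component is still $\le b_0$. For the "$D$-constant'' (balanced) part of the claim: the extremal-b.w.\ components of $\nabla T$ arise only from $\delta_i$ acting on extremal components of $T$ plus spin-coefficient terms; applying $D$ to these and commuting $D$ past $\delta_i$ using the commutator relations (Ricci identities), one picks up terms in $D\rho_{ij}=0$ (Kundt, and the Sachs-type equation), $D\tau_i$, $DL_{1i}$, etc., together with $D$ of the extremal components of $T$, which vanishes by the balanced hypothesis; the Einstein condition is what guarantees the relevant Ricci/Bianchi identities close without generating new positive-b.w.\ curvature terms. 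The main obstacle — and the step requiring genuine care rather than bookkeeping — is verifying that the commutator $[D,\delta_i]$ and $[D,\triangle]$ terms, when applied to the extremal components, do not generate a surviving $D$-noninvariant piece; this is precisely where one must invoke the explicit Bianchi and Ricci identities for type N Einstein Kundt geometries collected in Appendix \ref{App:NPeqs}, and it is the reason the Einstein (not merely Kundt) hypothesis is needed.
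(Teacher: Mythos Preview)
Your overall strategy is the same as the paper's: show that the type N Weyl tensor is ``balanced'' in a suitable sense and that this property is preserved under covariant differentiation, so by induction every $\nabla^{(k)}C$ has boost order at most $-2$. However, your execution contains two related errors that leave a genuine gap.

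First, you have the action of $D$ and $\triangle$ on boost weights reversed. Since $D=\ell^a\nabla_a$ and $\hat\ell=\lambda\ell$ under a boost, the operator $D$ carries boost weight $+1$: it \emph{raises} the boost weight of a frame component by one. Likewise $\triangle=n^a\nabla_a$ has b.w.\ $-1$ and \emph{lowers} boost weight. Thus the potentially dangerous b.w.\ $-1$ contributions to $\nabla C$ come from $D\Omega'_{ij}$ (together with $\rho_{ij}\Omega'$, which vanishes for Kundt), not from $\triangle\Omega'_{ij}$. You in fact state this correctly at the very end of your $k=1$ paragraph, but the preceding discussion contradicts it, and the same reversal infects your inductive step (``the only mechanism that could raise the boost weight is the $\triangle$ operator'').

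Second, and more seriously, your notion of ``balanced'' --- boost order $\le b_0$ together with $D$ annihilating only the \emph{extremal} (b.w.\ $b_0$) components --- is too weak for the closure lemma to hold. The b.w.\ $b_0$ components of $\nabla T$ include not only $\delta_i$ acting on b.w.\ $b_0$ components of $T$ and spin-coefficient terms, but also $D$ acting on the b.w.\ $(b_0-1)$ components of $T$ (precisely because $D$ raises b.w.\ by one). To show that $D$ kills these new extremal components of $\nabla T$ you would need $D^{2}$ of the b.w.\ $(b_0-1)$ components of $T$ to vanish, which your hypothesis does not provide; at the next step one needs $D^{3}$ of b.w.\ $(b_0-2)$ components, and so on. This is exactly why the paper uses the stronger notion of a \emph{1-balanced} scalar: a component $\eta$ of b.w.\ ${\rm b}$ must satisfy $D^{-{\rm b}-1}\eta=0$ for every ${\rm b}<-1$ (and vanish for ${\rm b}\ge -1$). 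The Ricci and Bianchi identities in Appendix~\ref{App:NPeqs} are then used to verify that the spin coefficients themselves obey matching $D$-power conditions (e.g.\ $D^{2}L_{11}=0$, $D^{2}\rho'_{ij}=0$, $D^{3}\kappa'_i=0$), and it is this graded control over \emph{all} boost-weight components, not just the top ones, that makes the closure lemma (Lemma~\ref{lem_1bal}) go through. With your weaker definition the inductive step already fails at $k=2$.
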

\begin{proof}
We  prove the above proposition using the balanced scalar approach introduced in \cite{Coleyetal04vsi}.
Let us  say that a  scalar $\eta$ with the boost weight b  is 1-balanced if $D^{-{\rm b}-1} \eta =0$ for ${\rm b}<-1$ and $\eta=0$ for ${\rm b} \geq -1$ and that a tensor is 1-balanced if 
all its  components are 1-balanced scalars. In the balanced scalar approach, it is convenient to assign boost weight to all NP quantities and thus  we consider only boosts with {\em constant} $\lambda$, then e.g. $L_{11}$ has b.w. -1.  

Now the  Ricci and Bianchi equations (see appendix \ref{App:NPeqs}) imply that for spin coefficients and Weyl components of various boost weights ${\rm b}$
\BEA
&& {\rm b}=-2:  \ \ D^3 \kappa'_i 
= 0, \ \ \ D {\Omega'}_{ij} =0,\label{DOmega}\\
&& {\rm b}=-1:  \ \ D^2 L_{11}=0, \ \ \ D^2 \rho'_{ij} 
= 0, \ \ \ D^2  \M{i}{j}{1} = 0, \label{DL11}\\
&& {\rm b}=0:  \ \ \ \ \ D {\tau_i} = 0 , \ \ \ D L_{1i}=0, \ \ \ D \M{i}{j}{k} =0.\label{DL_1i}
\EEA 

{  It follows that
\begin{lem}
\label{lem_1bal_coef}
In type N Einstein Kundt spacetimes, for a 1-balanced scalar $\eta$, scalars $L_{11} \eta$,  $\tau_i \eta$, $L_{1i}\eta$, $ \kappa'_i 
\eta $,  $ \rho'_{ij} 
\eta$,  $\Mi_{\!j1}\!  \eta$, $ \Mi_{\!kl}\!  \eta$ and $D \eta ,\ \delta_i \eta,\ \T \eta $ are also 1-balanced scalars.
\end{lem}
\begin{proof}
Let us present the proof for $\kappa'_i \eta$. The proof for scalars $L_{11} \eta$,  $\tau_i \eta$, $L_{1i}\eta$, 
 $ \rho'_{ij} 
\eta$,  $\Mi_{\!j1}\!  \eta$, $ \Mi_{\!kl}\!  \eta$  is similar.
From the definition, $D^q (\eta)=0$ for $q\geq -{\rm b}-1$  and from \eqref{DOmega}, $D^p (\kappa'_i)=0$ for $p\geq 3$. 
It follows that $D^p (\kappa'_i)D^q (\eta)=0$ if $p+q \geq -{\rm b}+1$. 
When expanding $D^{-{\rm b}+1}(\kappa'_i\eta)$ using the Leibnitz rule, all terms $D^p (\kappa'_i)D^q (\eta)$, $p+q=-{\rm b}+1$, vanish and thus 
$D^{-{\rm b}+1}(\kappa'_i\eta)=0$ and  $\kappa'_i\eta$ is 1-balanced.

For scalars involving derivatives, let us present the proof for $\delta_i \eta$. The proof for $\T \eta$ is similar and for $D\eta$ is trivial
($D^{-{\rm b}-2}(D\eta)=D^{-{\rm b}-1}\eta=0$). $D^q\eta$ is 1-balanced for all $q \geq 1$ and so is $L_{1i}D^q\eta$, i.e. $D^p (L_{1i}D^q \eta)=0$ for $p+q=-{\rm b}-1$. 
Using the commutator \eqref{comdD}
and the Leibnitz rule, it follows that  
$D^{-{\rm b}-1}(\delta_i\eta)=D^{-{\rm b}-2}(\delta_i D-L_{1i}D)\eta $, where $D^{-{\rm b}-2} (L_{1i}D \eta)$ vanishes. 
By repeatedly using the commutator \eqref{comdD} and using that $D^p (L_{1i}D^q \eta)=0$ 
for $p+q=-{\rm b}-1$, we arrive at 
 $D^{-{\rm b}-1}(\delta_i\eta)=\delta D^{-{\rm b}-1}\eta=0$ 
and thus $\delta_i\eta$ is 1-balanced.

\end{proof}}

In the covariant derivative of a 1-balanced tensor, only the above mentioned 1-balanced terms (like  $\tau_i \eta$, $D \eta_i$ etc.) appear and thus
 
\begin{lem}
\label{lem_1bal}
For type N Einstein  Kundt spacetimes, a covariant derivative of a 1-balanced tensor is a 1-balanced tensor.
\end{lem}

{  From \eqref{DOmega}, it follows that  for a type N Einstein Kundt spacetime, the Weyl tensor \eqref{typeNWeyl} is 1-balanced. 
Therefore, for these spacetimes, by lemma \ref{lem_1bal}, an arbitrary derivative of the Weyl tensor is also 1-balanced and by definition, all terms with boost weight
$\geq -1$ vanish. Proposition \ref{prop_1-balanced} immediately follows.} 
\end{proof}

{ 
Since  a rank-1 tensor has components of b.w. $\geq -1$ it is not possible to construct such a non-zero tensor from boost order -2 tensors. Therefore,
it follows from proposition \ref{prop_1-balanced} that 
\begin{cor}
\label{typeNcons}
For type N Einstein  Kundt spacetimes, all rank-2 tensors constructed from the Weyl tensor and its covariant derivatives of arbitrary order are conserved.
\end{cor}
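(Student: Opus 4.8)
The plan is to reduce conservation of such a tensor $T_{ab}$ to the boost-weight count already used in the sentence preceding the corollary. The first observation is that the class of tensors obtained as contractions of the metric with a polynomial in the factors $\nabla^{(k)} C$ (including $C$ itself) is closed under covariant differentiation and under contraction: by the Leibniz rule, $\nabla_e$ of a contracted product $\nabla^{(k_1)} C \otimes \cdots \otimes \nabla^{(k_m)} C$ is again a sum of such contracted products with one factor $\nabla^{(k_j)} C$ replaced by $\nabla^{(k_j+1)} C$, and then contracting the new index $e$ against a second index of $T$ using $g^{eb}$ stays inside the class. Hence, for any rank-$2$ tensor $T_{ab}$ in this class, the divergence $T^{ab}{}_{;b}$ is a rank-$1$ tensor in the same class.

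Next I would run the boost-order bookkeeping with respect to the multiple WAND $\bl$. By Proposition~\ref{prop_1-balanced}, every factor $\nabla^{(k)} C$ has boost order at most $-2$. Since boost order is additive under tensor products and contraction with the (boost-order-zero) metric cannot raise it, any contraction of a product of $m \ge 1$ such factors has boost order at most $-2m \le -2$; in particular $T^{ab}{}_{;b}$ has boost order at most $-2$. (Any summand of $T_{ab}$ containing no factor of the Weyl tensor is a constant multiple of $g_{ab}$, which is trivially conserved; a nonconstant Weyl scalar multiplying $g_{ab}$ vanishes by the same count, so these cases are harmless.) Finally, the frame components of a $1$-form have boost weights in $\{-1,0,+1\}$, so any rank-$1$ tensor all of whose nonzero components would have to carry boost weight $\le -2$ must vanish identically. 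Therefore $T^{ab}{}_{;b} = 0$, i.e.\ $T_{ab}$ is conserved.

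I do not expect a genuine obstacle; the only points needing care are formal: the closure of the admissible class under $\nabla$ and contraction, and the standard fact (which I would state explicitly or cite from \cite{OrtPraPra12rev}) that boost order is additive under $\otimes$ and non-increasing under metric contractions, which is what makes the bound $-2m \le -2$ rigorous. Everything then follows from Proposition~\ref{prop_1-balanced}.
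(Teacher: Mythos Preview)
Your proof is correct and follows essentially the same route as the paper. The paper's argument is the single sentence preceding the corollary: since a rank-1 tensor has components of boost weight $\geq -1$, no nonzero rank-1 tensor can be built from boost-order $\leq -2$ ingredients, so the divergence of any such rank-2 tensor vanishes; your proposal spells out the closure of the class under $\nabla$ and contraction and the additivity of boost order more explicitly, but the core idea is identical.
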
 }

Let as now argue that proposition \ref{prop_1-balanced} implies that, for  type N Einstein  Kundt spacetimes, it is not possible to construct a 
non-vanishing rank-2 tensor from $\nabla^{(k)} C$ (and the metric).

First, note that any rank-2 tensor has components of boost weight -2 or higher. From proposition \ref{prop_1-balanced}, it follows that
\begin{lem}
\label{lemma_typeNquad}
 In  type N Einstein  Kundt spacetimes, rank-2 tensors constructed from $\nabla^{(k)} C$,  which are quadratic or of higher order in $\nabla^{(k)} C$, vanish.
\end{lem}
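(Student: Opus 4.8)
The plan is to prove Lemma~\ref{lemma_typeNquad} by a boost-weight counting argument based directly on Proposition~\ref{prop_1-balanced}. First I would fix a null frame adapted to the multiple WAND $\bl$, so that every tensor appearing in the construction has a well-defined decomposition into frame components of integer boost weight, and recall that by Proposition~\ref{prop_1-balanced} each factor $\nabla^{(k)}C$ has boost order at most $-2$: i.e. all its frame components of boost weight $\geq -1$ vanish, and the lowest possible boost weight of any nonzero component is $-2$ (no frame component has boost weight less than $-2$ either, but that bound is not needed here). The metric $g_{ab}$ has boost order $0$; its only nonzero frame components $g_{ab}=2\ell_{(a}n_{b)}+\delta_{ij}m^{(i)}_am^{(j)}_b$ have boost weights $0$ (the $\delta_{ij}$ pieces) and $0$ (the $\ell n$ piece, since $\ell$ has b.w.\ $+1$ and $n$ has b.w.\ $-1$), so contracting with a copy of $g$ can only preserve or raise boost weight by at most... more carefully: raising an index with $g^{ab}$ on a component of boost weight $b$ produces components of boost weights $b$, $b$, and $b+2$ (the last from $\ell^a\ell^b$-type terms), and contracting two indices can only lower by... — the clean statement I would isolate as a sub-claim is that contracting indices and raising/lowering with the metric never produces a frame component of boost weight strictly lower than the minimum boost weight already present among the nonzero components of the tensors being contracted.

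Granting that sub-claim, the argument is short: take any rank-2 tensor $T_{ab}$ built as a polynomial contraction of $m\geq 2$ factors of the form $\nabla^{(k_1)}C,\dots,\nabla^{(k_m)}C$ together with factors of the metric. Order the frame so boost weights are additive under tensor product; each $\nabla^{(k_i)}C$ contributes at most $-2$ to the total boost weight of any surviving term in the tensor product, and the metric factors contribute $0$ (or, in the worst case, do not lower the boost weight below the minimum, by the sub-claim applied to index raising). Hence before contraction the tensor product $\nabla^{(k_1)}C\otimes\cdots\otimes\nabla^{(k_m)}C\otimes g\otimes\cdots$ has all nonzero frame components of boost weight $\leq -2m \leq -4$. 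Contractions and index manipulations, again by the sub-claim, cannot raise... — precisely, the key point is that a contraction can only preserve or lower boost weight among components, so every nonzero frame component of $T_{ab}$ has boost weight $\leq -4 < -2$. But any rank-2 tensor in a null frame has components $T_{\ell\ell}$ (b.w.\ $+2$), $T_{\ell i}$ (b.w.\ $+1$), $T_{\ell n},T_{ij}$ (b.w.\ $0$), $T_{ni}$ (b.w.\ $-1$), $T_{nn}$ (b.w.\ $-2$); the lowest available boost weight for a nonzero rank-2 tensor component is $-2$. A component of boost weight $\leq -4$ in a rank-2 tensor is identically zero. Therefore every frame component of $T_{ab}$ vanishes, so $T_{ab}=0$.

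The main obstacle, and the step I would write out with care, is the sub-claim about how metric contractions and index-raisings interact with boost weight — specifically ruling out that a term like $g^{\ell\ell}=0$ versus $g^{\ell n}$, $g^{nn}=0$ could conspire to raise a very-negative-b.w.\ index into a less-negative slot in a way that defeats the count. The cleanest route is to note that $g^{ab}$ in the null frame has nonzero components only $g^{\ell n}=g^{n\ell}=1$ and $g^{ij}=\delta^{ij}$; raising an $n$-type index (b.w.\ $-1$ at the slot, contributing $+1$ to the term) with $g^{\ell n}$ turns it into an $\ell$-type index (b.w.\ $+1$), a net change of $+2$ in that slot but matched by the $+1 \to -1$ reindexing — the upshot being that the \emph{total} boost weight of a fully-specified frame component is unchanged under raising/lowering, and a contraction $T^{a}{}_{a}$-style sums components whose boost weights are $b$ and $b$ (from $g^{ij}$) or $b+1$ and $b-1$ wait — the honest statement is simply: raising, lowering and contracting with $g$ produce only $\mathbb{Z}$-linear combinations of frame components of the original tensors, and in each such combination the boost weight is bounded below by the sum of the minimal boost weights of the contracted factors. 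I would establish this by the standard observation (cf.\ Proposition~2.1 of \cite{OrtPraPra12rev} and the discussion of boost order) that boost order depends only on $\bl$ and is additive under $\otimes$ and non-increasing under contraction with $g$, then conclude as above. A one-line alternative, which I would mention, is that $\nabla^{(k)}C$ being $1$-balanced (from the proof of Proposition~\ref{prop_1-balanced}) makes any product of two or more of them vanish outright at the level of the balanced-scalar bookkeeping, since the $D$-derivative annihilation orders add up beyond what a rank-2 object can sustain.
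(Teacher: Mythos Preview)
Your approach is correct and is exactly the paper's argument: the paper simply notes that any rank-2 tensor has components only of boost weight $\geq -2$, while by Proposition~\ref{prop_1-balanced} a product of two or more factors $\nabla^{(k)}C$ has boost order $\leq -4$, so the resulting rank-2 tensor must vanish. Your write-up expands the boost-weight/contraction bookkeeping that the paper leaves implicit (and contains some imprecise phrasing along the way, e.g.\ the aside about the ``lowest possible boost weight'' of $\nabla^{(k)}C$), but the core argument is identical.
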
 
We show below that  rank-2 tensors linear in $\nabla^{(k)} C$ vanish as well  {  (note that in order to construct such a rank-2 tensor, $k$ has to be even)}.  

For any tensor $T$, the commutator of covariant derivatives can be expressed as
\BE
[\nabla_a,\nabla_b]T_{c_1.\dots c_k}=T_{d\dots c_k}R^d_{c_1 ab}+\dots +T_{c_1\dots d}R^d_{c_kab} \label{commut_der}.
\EE

When $T$ is the  Weyl tensor, rank-2 tensors constructed from RHS of \eqref{commut_der} are either quadratic in $C$ (and thus vanish due to lemma 
\ref{lemma_typeNquad})  or contractions of the Weyl tensor with the Ricci tensor or the metric which, for Einstein spacetimes, vanish due to tracelessness of the Weyl tensor.

When expressing rank-2 tensors constructed from the Weyl tensor, we can thus assume that the first two covariant derivatives of the Weyl tensor commute. This will be expressed using the following notation
\BE
C_{abcd;ef} \cong C_{abcd;fe}. \label{commut2der}
\EE
Using \eqref{commut2der}, the Bianchi identities and tracelessness of the Weyl tensor, it follows that all rank-2 tensors constructed from $\nabla^{(2)} C$ vanish. 

Now, assume that for some even $n$ all covariant derivatives in $\nabla^{(n)} C $ in an expression of a rank-2 tensor commute and all rank-2 tensors {  constructed}
 from $\nabla^{(n)} C $ vanish. Using  
\eqref{commut_der}, we obtain
\be
C_{abcd;e_1\dots e_i f g h_1 \dots h_j} - C_{abcd;e_1\dots e_i g f  h_1 \dots h_j} \cong 0 \label{commutnp2der}
\ee
(where $i+j=n$) since when using \eqref{commutnp2der} in an expression of a rank-2 tensor {  constructed} from $\nabla^{(n+2)} C $ only rank-2 tensors {  constructed} from $\nabla^{(n)} C$ would appear on RHS of \eqref{commutnp2der}, but these are zero by our assumption. Using the Bianchi identities, it again follows that rank-2 tensors {  constructed} from $\nabla^{(n+2)} C $ vanish. Our assumption for $n$ thus also holds for $n+2$ and therefore these hold for all even $n$. Thus we arrived at

\begin{lem}
\label{lemma_typeNlinear}
 In  type N Einstein  Kundt spacetimes, rank-2 tensors constructed from $\nabla^{(k)} C$  which are linear in $\nabla^{(k)} C$ vanish.
\end{lem}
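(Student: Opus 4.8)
The plan is to proceed by induction on the (necessarily even) order $k$ of the covariant derivative appearing in the rank-2 tensor, with the base case $k=2$ and inductive step $k \to k+2$. Throughout, I would repeatedly exploit two facts: first, by Lemma \ref{lemma_typeNquad}, any rank-2 tensor that is quadratic or higher in $\nabla^{(j)} C$ vanishes, so whenever a commutator of covariant derivatives is applied to the Weyl tensor the curvature terms it produces may be discarded (they are either quadratic in $C$, hence zero by Lemma \ref{lemma_typeNquad}, or contractions of $C$ with the Ricci tensor or metric, hence zero for Einstein spacetimes by tracelessness of $C$); second, the Bianchi identities $C_{ab[cd;e]}=0$ together with tracelessness allow any contraction that a rank-2 tensor must perform on the indices of $\nabla^{(k)} C$ to be reorganized. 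The combination of these is summarized in the notation \eqref{commut2der} and \eqref{commutnp2der}: modulo terms already known to vanish, all derivative indices commute.

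For the base case $k=2$: a rank-2 tensor linear in $\nabla^{(2)}C = C_{abcd;ef}$ is obtained by contracting six of the eight indices $a,b,c,d,e,f$ in pairs (using $g$), leaving two free. Using \eqref{commut2der} to commute $e$ and $f$, the symmetries of the Weyl tensor, the first Bianchi identity, and the contracted Bianchi identity (which for Einstein spacetimes forces $C^{abcd}_{\ \ \ \ ;d}$ and hence various traces to vanish), together with the plain fact that $C^{ab}_{\ \ ab\,;ef}$-type contractions vanish by tracelessness, one checks that every such contraction is a combination of terms each carrying a trace on a pair of Weyl indices, hence zero. This is the routine computational core and I would not spell it out in full.

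For the inductive step, assume the claim for some even $n$, i.e. every rank-2 tensor built linearly from $\nabla^{(n)}C$ vanishes and, in any rank-2 expression, all $n$ derivative indices of $\nabla^{(n)}C$ may be freely commuted. Apply \eqref{commut_der} to $\nabla^{(n+2)}C$: commuting two adjacent derivative indices $f,g$ somewhere in the middle of the derivative string changes the expression by curvature contractions acting on $C_{abcd;e_1\dots e_i\,h_1\dots h_j}$ with $i+j=n$; after contracting to a rank-2 tensor these are rank-2 tensors linear in $\nabla^{(n)}C$ (the Riemann-tensor factor being either a metric/Ricci contraction or a Weyl factor, the latter making the whole expression quadratic), hence zero by the inductive hypothesis (and Lemma \ref{lemma_typeNquad}). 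So all $n+2$ derivative indices commute in any rank-2 expression, giving \eqref{commutnp2der}. Then a rank-2 tensor linear in $\nabla^{(n+2)}C$, with its $n+2$ derivative indices now fully symmetric, can be handled by the Bianchi identities exactly as in the $k=2$ case: each contraction is reduced to one containing a trace on a Weyl-index pair, hence vanishes. This closes the induction, and since every rank-2 tensor linear in some $\nabla^{(k)}C$ has $k$ even, the lemma follows.

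The main obstacle is the bookkeeping in the inductive step: one must be careful that commuting a pair of derivative indices anywhere in the string $e_1\dots e_i f g h_1\dots h_j$ really does reduce to the inductive hypothesis, i.e. that the curvature terms produced never re-introduce a $\nabla^{(n+2)}C$ contribution and that a single transposition suffices to generate the full symmetric group on the derivative indices. Both are standard — transpositions of adjacent elements generate $S_{n+2}$, and each transposition costs only lower-order curvature terms — but stating it cleanly is where the care is needed; the subsequent use of the Bianchi identities to kill the fully-symmetrized object is then the same argument as for $\nabla^{(2)}C$.
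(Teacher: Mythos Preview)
Your proposal is correct and follows essentially the same approach as the paper: induction on even $k$, with the base case $k=2$ handled via \eqref{commut2der}, the Bianchi identities and tracelessness, and the inductive step $n\to n+2$ via \eqref{commut_der} together with Lemma~\ref{lemma_typeNquad} and the inductive hypothesis to obtain \eqref{commutnp2der}. Your write-up is in fact more explicit than the paper's (you spell out why adjacent transpositions suffice and why the curvature terms produced by the commutator drop to order~$n$), but the structure and the key ingredients are identical.
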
 

A direct consequence of lemmas \ref{lemma_typeNquad} and  \ref{lemma_typeNlinear} is 
\begin{prop}
\label{prop_typeNuniv}
{ All type N Einstein Kundt spacetimes are universal.}
\end{prop}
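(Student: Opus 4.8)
The plan is to deduce Proposition \ref{prop_typeNuniv} directly from the structural results already assembled, chaining them in the right order. Recall that a universal spacetime (Definition \ref{univ}) is one in which every conserved symmetric rank-2 tensor built polynomially from $g_{ab}$, the Riemann tensor, and its covariant derivatives is a constant multiple of $g_{ab}$. For a type N Einstein Kundt spacetime the Riemann tensor splits as $R_{abcd} = C_{abcd} + \tfrac{R}{n(n-1)}(g_{ac}g_{bd}-g_{ad}g_{bc})$ with $R$ constant (Einstein), so any polynomial rank-2 tensor is a sum of a term proportional to $g_{ab}$ (coming entirely from the constant-curvature piece and the metric) plus terms that involve at least one factor of the Weyl tensor or one of its covariant derivatives. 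Hence the whole task reduces to showing that every conserved symmetric rank-2 tensor whose construction involves at least one factor of $\nabla^{(k)}C$ (for some $k\ge 0$) in fact vanishes identically.

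First I would make the reduction above precise: expand an arbitrary polynomial rank-2 contraction in terms of the Weyl pieces and the constant-curvature pieces, isolating the part with no Weyl factor (which is manifestly a constant multiple of $g_{ab}$ by the Einstein condition, since all contractions of the constant-curvature tensor with the metric are constants times $g_{ab}$) from the remainder, which contains at least one factor $\nabla^{(k)}C$. Second, I would invoke Lemma \ref{lemma_typeNquad} to kill all remainder terms that are quadratic or of higher order in the Weyl tensor and its derivatives. Third, for the remainder terms linear in $\nabla^{(k)}C$ — where the other factors are metrics, constant-curvature tensors, or contractions thereof — I would note that after using the Einstein condition these reduce to rank-2 tensors built linearly out of a single $\nabla^{(k)}C$ (contracted with metrics), and Lemma \ref{lemma_typeNlinear} shows these vanish as well. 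Thus any conserved symmetric rank-2 tensor equals $\lambda g_{ab}$ with $\lambda$ constant, which is exactly the universality condition.

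One point that still needs care: Definition \ref{univ} quantifies over \emph{conserved} rank-2 tensors, so strictly one only needs to classify those, but the argument above actually shows the stronger statement that every polynomial symmetric rank-2 tensor (conserved or not) is $\lambda g_{ab}$ up to terms annihilated by Lemmas \ref{lemma_typeNquad}–\ref{lemma_typeNlinear}; in particular Corollary \ref{typeNcons} is not even logically needed for this direction, though it fits the narrative. I would also remark that the constant $\lambda$ is forced to be constant because $R$ is constant (Einstein) and all the surviving contractions are built from it. The main obstacle, conceptually, is simply being careful that the decomposition of a polynomial rank-2 tensor into a "pure metric/constant-curvature part" plus a "Weyl-involving part" is legitimate for arbitrary contraction patterns and arbitrary numbers of derivatives — but this is purely bookkeeping, since covariant derivatives of the constant-curvature tensor vanish and the Leibniz rule distributes derivatives over the Weyl and non-Weyl factors cleanly; the genuine analytic content was already done in Proposition \ref{prop_1-balanced} and Lemmas \ref{lemma_typeNquad} and \ref{lemma_typeNlinear}.

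Therefore the proof of Proposition \ref{prop_typeNuniv} is short: combine the Einstein decomposition of the Riemann tensor with Lemmas \ref{lemma_typeNquad} and \ref{lemma_typeNlinear} to conclude that every polynomial symmetric rank-2 tensor is a constant multiple of $g_{ab}$, hence a fortiori every conserved one is, which is the definition of universality. I expect the write-up to be only a paragraph or two, essentially assembling the pieces listed above.
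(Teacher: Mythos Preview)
Your proposal is correct and follows essentially the same route as the paper: the proposition is stated there as an immediate consequence of Lemmas \ref{lemma_typeNquad} and \ref{lemma_typeNlinear}, with the Einstein decomposition of the Riemann tensor into Weyl plus constant-curvature pieces taken as understood. Your write-up simply makes that decomposition explicit, which is fine and arguably clearer.
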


\subsection{The necessity proof}
\label{secNnecessity}

Now, let us show that, for type N Einstein spacetimes, CSI implies Kundt.

The simplest non-trivial
curvature invariant  for type N spacetimes is  \cite{BicPra98} 
\BE
 I_{N} \equiv C^{a_1 b_1 a_2 b_2  ; c_1 c_2} C_{a_1 d_1 a_2  d_2 ; c_1 c_2} C^{e_1 d_1 e_2 d_2 ;f_1 f_2} C_{e_1 b_1 e_2 b_2  ; f_1 f_2} .
 \label{InvN}
\EE
In terms of higher dimensional GHP quantities, it can be shown \cite{Coleyetal04vsi} (see also  \cite{OrtPraPra10}) that $I_N$ is proportional (via a numerical constant) to  
\be
I_N\propto\left[ ({\Omega'}_{22})^2 + ({\Omega'}_{23})^2  \right]^2  (S^2+A^2)^4,
\ee
where $S$ and $A$ are introduced in \eqref{GStypeN}.
For type N Einstein spacetimes, $I_N$ is constant iff $S^2+A^2=0$, which implies Kundt. This can be shown either by expressing $D I_N$ and using the Bianchi and Ricci identities or by looking  at the explicit $r-$dependence of $I_N$ in the non-Kundt case \cite{OrtPraPra10} 
\be
I_N\propto \frac{\left[ ({\Omega'}_{22}^0)^2 + ({\Omega'}_{23}^0)^2  \right]^2 }{(r^2+a_0^2)^6},
\ee
where $r$ is an affine parameter along null geodetic integral curves of $\bl$.

We thus arrived at
\begin{lem}
CSI type N Einstein  spacetimes belong to the Kundt class.
\end{lem}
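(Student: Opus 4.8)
The plan is to produce a single polynomial curvature invariant whose constancy already forces the Kundt condition; since a CSI spacetime has \emph{all} polynomial curvature invariants constant, detecting the optical matrix with one such invariant suffices. So first I would fix the kinematical normalization: for a type N Einstein spacetime the multiple WAND $\bl$ is geodetic \cite{Pravdaetal04}, so we may take it affinely parameterized with a parallelly transported frame, in which case the optical matrix is restricted to the form \eqref{GStypeN}. In particular $\rho_{ij}=0$ — i.e.\ the spacetime is Kundt — is equivalent to $S^2+A^2=0$, so the statement reduces to showing that CSI implies $S^2+A^2=0$.

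Next I would bring in the invariant $I_N$ of \eqref{InvN}, a genuine polynomial curvature invariant built from second covariant derivatives of the Weyl tensor, hence constant on any CSI spacetime. Using the higher dimensional GHP form of $\nabla^{(2)}C$ one has, up to a nonzero numerical factor, $I_N\propto[({\Omega'}_{22})^2+({\Omega'}_{23})^2]^2(S^2+A^2)^4$ (cf.\ \cite{Coleyetal04vsi,OrtPraPra10}). Argue by contradiction: suppose the spacetime is not Kundt, so $S^2+A^2\neq 0$ on some open set. Integrating the optical (Sachs) equations along the affinely parameterized $\bl$, which for a type N Einstein spacetime force $S+\mathrm{i}A=(r+\mathrm{i}a_0)^{-1}$ in terms of the affine parameter $r$, and combining this with the Bianchi identity for $D{\Omega'}_{ij}$ — whose right-hand side does not vanish in the non-Kundt case — one obtains in the expression for $I_N$ the explicit dependence $I_N\propto[({\Omega'}_{22}^0)^2+({\Omega'}_{23}^0)^2]^2/(r^2+a_0^2)^6$, which is non-constant in $r$. (As a fallback that avoids integrating the optical equations, one can instead compute $D I_N$ directly from the Bianchi and Ricci identities and verify that $D I_N=0$ forces $S^2+A^2=0$.)

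The step I expect to be the real obstacle is excluding the degenerate case $({\Omega'}_{22}^0)^2+({\Omega'}_{23}^0)^2\equiv 0$, in which $I_N$ vanishes identically and is trivially constant even off the Kundt class. For this I would use that the spacetime is genuinely of type N (the Weyl tensor does not vanish), together with the fact that in the non-Kundt case the Bianchi identities align ${\Omega'}_{ij}$ with the optical structure, confining its support to the $2$--$3$ block — a higher dimensional Goldberg--Sachs-type restriction \cite{Pravdaetal04,Durkeeetal10,OrtPraPra10} — so that $({\Omega'}_{22})^2+({\Omega'}_{23})^2\neq 0$ wherever the Weyl tensor is nonzero. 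This closes the contradiction, so a CSI type N Einstein spacetime must have $\rho_{ij}=0$ and therefore belongs to the Kundt class.
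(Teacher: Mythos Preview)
Your proposal is correct and follows essentially the same route as the paper: both use the invariant $I_N$ of \eqref{InvN}, its expression $I_N\propto[({\Omega'}_{22})^2+({\Omega'}_{23})^2]^2(S^2+A^2)^4$, and either the explicit $r$-dependence $I_N\propto[({\Omega'}_{22}^0)^2+({\Omega'}_{23}^0)^2]^2/(r^2+a_0^2)^6$ in the non-Kundt case or the direct computation of $DI_N$ to conclude that constancy of $I_N$ forces $S^2+A^2=0$. Your explicit treatment of the degenerate case $({\Omega'}_{22})^2+({\Omega'}_{23})^2=0$ via the Bianchi constraints confining ${\Omega'}_{ij}$ to the $2$--$3$ block is a detail the paper leaves implicit, but the argument is otherwise the same.
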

Together with theorem \ref{prop_univCSI} this gives
\begin{prop}
\label{prop_typeNuniv_Kundt}
Universal type N Einstein  spacetimes belong to the Kundt class.
\end{prop}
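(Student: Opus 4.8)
The plan is to obtain this proposition as an immediate corollary of Theorem \ref{prop_univCSI} together with the lemma just proved. First I would invoke Theorem \ref{prop_univCSI}: any universal spacetime is CSI, so a universal type N Einstein spacetime is in particular a CSI type N Einstein spacetime. Then I would apply the preceding lemma, which asserts that CSI type N Einstein spacetimes lie in the Kundt class. Chaining these two implications yields the claim, with no further computation.

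If one prefers a more self-contained argument that does not route through the full strength of Theorem \ref{prop_univCSI}, I would instead exhibit a single conserved symmetric rank-2 tensor whose trace is a numerical multiple of the invariant $I_N$ of \eqref{InvN}. Such a tensor is guaranteed to exist: $I_N$ is a polynomial curvature invariant of the allowed form, and the construction in section \ref{sec_univCSI} (via the Iyer--Wald variation) produces, for any such invariant, a conserved tensor whose trace equals that invariant up to lower-order and constant contributions; universality then forces $T^a_{~a}$, and hence $I_N$, to be constant. Since ${\Omega'}_{ij}\neq 0$ for a genuine type N spacetime and, for type N Einstein spacetimes, $\bl$ is geodetic with optical matrix of the form \eqref{GStypeN}, the expression $I_N\propto\left[ ({\Omega'}_{22})^2 + ({\Omega'}_{23})^2 \right]^2 (S^2+A^2)^4$, together with its explicit behaviour $I_N\propto \left[ ({\Omega'}_{22}^0)^2 + ({\Omega'}_{23}^0)^2 \right]^2/(r^2+a_0^2)^6$ along the null geodesics of $\bl$ in the non-Kundt case, shows that constancy of $I_N$ forces $S^2+A^2=0$, i.e. the optical matrix vanishes. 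Thus $\bl$ is non-expanding, shear-free and twist-free geodetic, so the spacetime is Kundt.

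The statement is essentially a corollary, so I do not expect any real obstacle at this stage; the substantive content has already been carried out, namely the proof of Theorem \ref{prop_univCSI} and the identification of $I_N$ with the stated optical-scalar expression together with the determination of its $r$-dependence in the expanding or twisting case. Granting those inputs, the proof is a one-line composition of implications.
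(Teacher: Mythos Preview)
Your proposal is correct and matches the paper's own proof essentially verbatim: the paper derives the proposition by combining Theorem~\ref{prop_univCSI} (universal $\Rightarrow$ CSI) with the immediately preceding lemma (CSI type N Einstein $\Rightarrow$ Kundt), exactly as in your first paragraph. Your alternative self-contained argument simply unpacks the same ingredients, so there is nothing to add.
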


Theorem \ref{prop_typeN}   now directly follows from Propositions  \ref{prop_typeNuniv} and \ref{prop_typeNuniv_Kundt}.


\section{Necessary and sufficient conditions for type III spacetimes}
\label{sec_typeIII}

  In this section, we present the  proof { of theorem \ref{prop_typeIII}} - we show that $\tau_i=0$ type III Einstein Kundt spacetimes
obeying \eqref{QGterm} are universal. Let us here briefly summarize the main points of the proof.

First, we show that for  type III Einstein Kundt spacetimes (including both $\tau_i=0$ and  $\tau_i \not=0$ cases)  boost order of an arbitrary covariant derivative of the Weyl tensor with respect to the multiple WAND  is at most  -1. 
Thus, a non-vanishing rank-2 tensor constructed from the metric, the Weyl tensor and its covariant derivatives can be at most quadratic in the Weyl tensor or its derivatives. Then we use these results to prove the theorem \ref{prop_typeIII}.

Type III spacetimes are defined as spacetimes admitting a frame in which only  boost weight -1 and -2 components of the Weyl tensor are present.
{  In this frame, the Weyl tensor has the form } \cite{Coleyetal04,OrtPraPra12rev}
\be
   C_{abcd} = 8 {\Psi'}_{i}\, \ell^{}_{\{a} n^{}_b \ell^{}_c m^{(i)}_{\, d\: \}} +
    4 {\Psi'}_{ijk}\, \ell^{}_{\{a} m^{(i)}_{\, b} m^{(j)}_{\, c} m^{(k)}_{\, d\: \}} +
      4 {\Omega'}_{ij}\, \ell^{}_{\{a} m^{(i)}_{\, b}  \ell^{}_{c}  m^{(j)}_{\, d\: \}},\label{typeIIIWeyl}
\ee
{  where $ {\Psi'}_{i}=  {\Psi'}_{jij}$ and ${\Omega'}_{ij}$ is symmetric and traceless as in the type N case}.

{ 
As argued in \eqref{typeIIIconserved}, for type III Einstein spacetimes, $S^{\rm III}_{ab}$ is conserved and thus for type III universal spacetimes, $S^{\rm III}_{ab}$ is proportional to the metric and in fact, due to its vanishing trace,
\BE
S^{\rm III}_{ab} \equiv C_{acde} C_{b}^{\ cde} = 0. \label{typeIIIconservedvanishes}
\EE }

\subsection{The sufficiency proof}

\label{sec_typeIIIsuff}

We start with a type III analog of proposition \ref{prop_1-balanced}:
\begin{prop}
\label{prop-balanced}
For type III Einstein  Kundt spacetimes, the boost order of  $\nabla^{(k)} C$  (a covariant derivative  of an arbitrary order of the Weyl tensor)
{  {with respect to the multiple WAND}}  is at most $-1$. 
\end{prop}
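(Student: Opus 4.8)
The plan is to mirror the structure of the type N sufficiency proof (Proposition \ref{prop_1-balanced}) using a suitably weakened notion of balance. Instead of 1-balanced scalars, I would work with what one might call \emph{0-balanced} scalars: a scalar $\eta$ of boost weight ${\rm b}$ is $0$-balanced if $D^{-{\rm b}}\eta = 0$ for ${\rm b}<0$ and $\eta=0$ for ${\rm b}\geq 0$. A tensor is $0$-balanced if all its frame components are. The target of the proposition is exactly the statement that $\nabla^{(k)}C$ is $0$-balanced for all $k$, which, by definition, forces its boost order to be at most $-1$.

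First I would record, from the Ricci and Bianchi identities specialized to type III Einstein Kundt spacetimes (with $\rho_{ij}=0$), the decay rates $D^{m}(\text{quantity})=0$ for the relevant spin coefficients and Weyl components — this is the type III analogue of \eqref{DOmega}--\eqref{DL_1i}. The key input is that for Kundt spacetimes $\rho_{ij}=0$, so the covariant-derivative formulas \eqref{dl}--\eqref{dm} no longer produce boost-weight $0$ terms from $\rho_{ij}$, and the Bianchi identities control the $D$-derivatives of $\Psi'_i$, $\Psi'_{ijk}$ and $\Omega'_{ij}$. Next I would prove the type III counterpart of Lemma \ref{lem_1bal_coef}: if $\eta$ is a $0$-balanced scalar, then multiplying $\eta$ by any of the frame spin coefficients appearing in \eqref{dl}--\eqref{dm} (for Kundt, these are $L_{11}$, $L_{1i}$, $\tau_i$, $\kappa'_i$, $\rho'_{ij}$, $\M{i}{j}{1}$, $\M{i}{j}{k}$) again gives a $0$-balanced scalar, and so do $D\eta$, $\delta_i\eta$ and $\T\eta$. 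The multiplication cases follow by the same Leibniz-rule-plus-bounded-$D$-powers argument as in the type N proof; the derivative cases use the commutators (e.g. \eqref{comdD}) exactly as before. From this, the type III analogue of Lemma \ref{lem_1bal} is immediate: a covariant derivative of a $0$-balanced tensor is $0$-balanced. Since \eqref{typeIIIWeyl} together with the Bianchi-identity relations $D\Psi'_i$, $D\Psi'_{ijk}$, $D\Omega'_{ij}$ (appropriately) shows the type III Weyl tensor itself is $0$-balanced, an induction on $k$ gives the proposition.

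The main obstacle — and the reason this is genuinely more delicate than type N — is that for type III the Weyl tensor contains boost-weight $-1$ components ($\Psi'_i$ and $\Psi'_{ijk}$), so the relevant $D$-decay exponents are shifted by one relative to the type N case, and one must check that the Bianchi identities really do kill all would-be boost-weight $0$ pieces in $DC$ rather than merely boost-weight $\geq 1$ pieces. Concretely, I would need the Bianchi identity to give $D\Psi'_{ijk}=0$ (or at worst an expression in lower-boost-weight already-$0$-balanced quantities) and the analogous bounded-order statements for $\Psi'_i$, $L_{11}$, $\rho'_{ij}$, $\M{i}{j}{1}$, etc.; verifying these from the appendix equations in the Kundt, Einstein, type III setting — and confirming they hold whether or not $\tau_i=0$ — is the computational heart of the argument. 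Once the decay-order table is in hand, the balanced-scalar bookkeeping is routine and the induction closes, yielding that the boost order of $\nabla^{(k)}C$ with respect to the multiple WAND is at most $-1$ for every $k$.
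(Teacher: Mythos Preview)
Your proposal is correct and follows essentially the same route as the paper: the paper defines exactly your ``0-balanced'' notion (calling it simply \emph{balanced}, following \cite{Coleyetal04vsi}), records the $D$-decay table for the spin coefficients and Weyl components from the Ricci and Bianchi identities (in particular $D\Psi'_i=D\Psi'_{ijk}=0$ and $D^2\Omega'_{ij}=0$), and then runs the identical Leibniz-rule/commutator argument to show balancedness is preserved under covariant differentiation. Your identification of the computational core --- verifying the $D$-decay rates directly from the appendix equations, valid regardless of whether $\tau_i$ vanishes --- is exactly what the paper does.
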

\begin{proof}
The proof of this proposition is similar to the proof of proposition \ref{prop_1-balanced}.
In this case, we  define balanced scalars as  in \cite{Coleyetal04vsi}.
Let us  say that a  scalar $\eta$ with the boost weight ${\rm b}$  is balanced if $D^{-{\rm b}} \eta =0$ for ${\rm b}<0$ and $\eta=0$ for ${\rm b} \geq 0$ and that a tensor is balanced if all its  components are balanced scalars.

For type III Einstein Kundt spacetimes, the Ricci and Bianchi equations (see appendix \ref{App:NPeqs})  for spin coefficients and Weyl components of various boost weights ${\rm b}$ imply that
\BEA
&& {\rm b}=-2:  \ \ D^3 \kappa'_i 
                  = 0, \ \ \ D^2 {\Omega'}_{ij} =0,\\
&& {\rm b}=-1:  \ \ D^2 L_{11}=0, \ \ \ D^2 \rho'_{ij} 
= 0, \ \ \, \ D^2  \M{i}{j}{1} = 0, \ \ \ D {\Psi'}_{ijk} =0, \ \ \  D {\Psi'}_{i} = 0, \\
&& {\rm b}=0:  \ \ \ \ \ D {\tau_i} = 0 , \ \ \ D \M{i}{j}{k} =0.
\EEA 
It follows that for a balanced scalar $\eta$, scalars $L_{11} \eta$,  $\tau_i \eta$, $ \kappa'_i 
\eta $,  $ \rho'_{ij} 
\eta$,  $\Mi_{\!j1}\!  \eta$ and $ \Mi_{\!kl}\!  \eta$ are also balanced scalars. Using commutators given in appendix \ref{App:NPeqs}, one can  show that 
$D \eta ,\ \delta_i \eta,\ \T \eta $ are  balanced scalars as well.

Analogously to the type N case, it follows that a covariant derivative of a balanced tensor is a balanced tensor and since type III Weyl tensor for a  type III Einstein Kundt spacetime  is balanced, proposition \ref{prop-balanced} follows. 
\end{proof}

{  From proposition \ref{prop-balanced}, it follows that for  type III Ricci-flat Kundt spacetimes, a rank-2  tensor that has in general components only with boost weight $\geq -2$, can be  at most quadratic in the Weyl tensor and its covariant derivatives since components of the Weyl tensor and its derivatives have boost weight $\leq -1$. }
{  Thus,}
 a direct consequence of proposition \ref{prop-balanced} is

\begin{lem}
\label{lemma_quadraticC}
For  type III Ricci-flat Kundt spacetimes, a non-vanishing rank-2  tensor constructed from the metric, the Weyl tensor and its covariant derivatives of arbitrary order is at most quadratic in the Weyl tensor and its covariant derivatives.
\end{lem}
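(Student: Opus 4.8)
The plan is to prove Lemma \ref{lemma_quadraticC} as an immediate corollary of Proposition \ref{prop-balanced}, essentially the type III analogue of the reasoning used for Lemma \ref{lemma_typeNquad} in the type N case. The key observation is a boost-weight counting argument: any symmetric rank-2 tensor $T_{ab}$, when its frame components are classified by boost weight, can only contain components of boost weight $-2$, $-1$, $0$, $1$, or $2$ (this is just the general fact that a rank-$q$ tensor has boost weights ranging from $-q$ to $q$). In particular, the lowest possible boost weight of a rank-2 tensor is $-2$.

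First I would recall that, by Proposition \ref{prop-balanced}, for a type III Einstein Kundt spacetime every factor of the form $\nabla^{(k)} C$ has boost order at most $-1$, meaning all of its nonvanishing frame components have boost weight $\leq -1$. Next, I would note that any rank-2 tensor built as a contraction of a polynomial in the metric, the Weyl tensor, and its covariant derivatives is a sum of monomials, each of which is (after contracting) a product of some number $m$ of factors of the type $\nabla^{(k_1)}C, \dots, \nabla^{(k_m)}C$ (the metric factors carry boost weight $0$ and do not help). Since boost weights add under tensor products and are unchanged by contraction with $\delta_{ij}$, $\ell^a n_a$-type contractions in a parallelly transported frame, a monomial with $m$ curvature factors has every frame component of boost weight $\leq -m$. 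For this monomial to contribute a nonzero rank-2 tensor, it must have some component of boost weight $\geq -2$, which forces $-m \geq -2$, i.e. $m \leq 2$. Hence only the linear and quadratic monomials in the Weyl tensor and its derivatives can survive, which is exactly the statement of the lemma.

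The one subtlety worth making explicit — and the only place where any care is needed — is the bookkeeping of boost weights under contractions: one must check that contracting two frame indices never raises the boost weight of a component above the sum of the boost weights of the factors. In the parallelly transported null frame this is automatic, since $\delta_{ij}$ has boost weight $0$ and the only nonzero components of the metric pairing the null directions are $\ell^a n_a = 1$, again of boost weight $0$; thus every allowed contraction preserves or (for the $\ell$–$n$ contraction, which pairs a b.w.\ $+b_1$ slot with a b.w.\ $-b_1$ slot) is consistent with additivity of boost weights. So there is no real obstacle here; the lemma is a clean consequence of Proposition \ref{prop-balanced}. I would therefore present the proof in two or three sentences: invoke Proposition \ref{prop-balanced} to bound the boost order of each curvature factor by $-1$, observe that a product of $m\geq 3$ such factors has all components of boost weight $\leq -3$ and so cannot contribute to a rank-2 tensor, and conclude that any nonvanishing rank-2 tensor is at most quadratic in the Weyl tensor and its derivatives.
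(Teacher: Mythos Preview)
Your proposal is correct and takes essentially the same approach as the paper: the lemma is stated there as a direct consequence of Proposition~\ref{prop-balanced}, using exactly the boost-weight counting argument you describe (rank-2 tensors have components of boost weight $\geq -2$, while each factor $\nabla^{(k)}C$ contributes boost weight $\leq -1$, so at most two factors can appear). Your discussion of why contractions preserve additivity of boost weights is more explicit than the paper's one-sentence justification, but the underlying reasoning is identical.
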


{ 
Note that for type III Ricci-flat Kundt spacetimes, following a  similar argument as in corollary \ref{typeNcons}, all rank-2 tensors quadratic in the Weyl tensor and its derivatives are conserved. }

However, note that the rank-2 Weyl polynomial $C_{acde} C_{b}^{\ cde}$  which is  quadratic in $C$ is in general non-vanishing for type III Einstein Kundt spacetimes \cite{MalekPravdaQG} and thus in general these spacetimes are not universal.
Obviously, type III Kundt universal spacetimes will be subject to  \eqref{QGterm}.
Furthermore, even when assuming that \eqref{QGterm} holds, we find that in the Ricci-flat case, where FKWC basis \cite{FKWC} of rank-2, order-6 Weyl polynomials consists of 
\BE
F_1 \equiv C^{pqrs} C_{pqrs;ab}, \ \ \ F_2 \equiv C^{pqrs}_{\ \ \ \ \ ;a} C_{pqrs;b},  \ \ \ F_3 \equiv C^{pqr}_{\ \ \ \ a;s} C_{pqrb}^{\ \ \ \ \ ;s}, \label{eqF}
\EE
$F_1$ and $F_2$ are in general non-vanishing (while $F_3$ vanishes as a consequence of  \eqref{QGterm}). We  show in section \ref{subsec_quad} that, 
 for $\tau_i =0$, $F_1$ and $F_2$ vanish.
Thus in the following, we   focus on the $\tau_i =0$ case leaving the question whether some  particular type III Kundt universal metrics with $\tau_i \not=0$ exist open.

 
\subsubsection{Rank-2 Weyl polynomials quadratic in $\nabla^{(k)} C$ for type III,  $\tau_i =0$ Einstein  Kundt spacetimes obeying \eqref{QGterm} vanish }
\label{subsec_quad}

For type III Einstein Kundt, the Ricci identity \eqref{Ricci_11i} implies that $\Lambda$ vanishes for $\tau_i =0$. 

From the commutator of partial derivatives  \eqref{commut_der} and lemma \ref{lemma_quadraticC}, it follows that in an expression for a rank-2 Weyl polynomial
quadratic in $\nabla^{(k)} C$, RHS of \eqref{commut_der} vanishes (since all terms are cubic in $C$) and thus   covariant derivatives in such expressions effectively commute
\BE
T_{c_1.\dots c_k;ab} \cong T_{c_1.\dots c_k;ba}. \label{commut}
\EE
This allows us to prove\begin{lem}
\label{lemma_insum}
For  type III, $\tau_i=0$ Ricci-flat Kundt spacetimes, a rank-2  tensor constructed from the metric, the Weyl tensor and its covariant derivatives of arbitrary order quadratic in $\nabla^{(k)} C$ vanishes if it contains a summation within $\nabla^{(k)} C$. 
\end{lem}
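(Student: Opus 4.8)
The plan is to exploit the balanced-scalar structure established in Proposition \ref{prop-balanced} together with the vanishing of the shear-type rotation to show that any contracted index pair carried \emph{inside} a single copy of $\nabla^{(k)}C$ forces an extra factor of a negative-boost-weight quantity, pushing the boost weight of the resulting rank-2 tensor below $-2$, hence to zero. Concretely, I would first recall that for type III Einstein Kundt with $\tau_i=0$ we have $\rho_{ij}=0$, $L_{1i}=0$ (by \eqref{DL_1i}-type arguments adapted to type III, or rather the analogous relations in the type III list) and $\kappa'_i$, $L_{11}$, $\rho'_{ij}$, $\M{i}{j}{1}$, $\Psi'_{ijk}$, $\Psi'_i$, $\Omega'_{ij}$ all have boost weight $\leq -1$; in particular the only frame-derivative operator that does \emph{not} lower boost weight is $D$, while $\delta_i$ and $\T$ each effectively contribute a balanced (b.w.\ $\leq -1$) quantity when hitting the Weyl tensor or any balanced tensor.

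Next I would set up the bookkeeping for a single factor $C_{abcd;e_1\cdots e_k}$: decompose each frame index $a,b,c,d,e_1,\dots$ into the $\bl$, $\bn$, $\bm^{(i)}$ pieces, and track the boost weight of a given scalar frame component as the sum of the boost weights contributed by the indices (with $\ell$-projection $+1$, $n$-projection $-1$, $m$-projection $0$) plus the intrinsic boost weight of the component of $\nabla^{(k)}C$, which by Proposition \ref{prop-balanced} is $\leq -1$. A rank-2 tensor quadratic in $\nabla^{(k)}C$ is built by contracting two such factors; to land in boost weight $\geq -2$ (the minimum a rank-2 tensor can have, since it has a free index pair) one needs each factor essentially ``saturated'' — roughly, each factor must be contracted on its maximal number of $\ell$-indices against $n$-indices of the partner, and the leftover free indices must be the two highest-boost-weight slots. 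The key observation for Lemma \ref{lemma_insum} is that if one of the two factors carries an \emph{internal} contraction (a summed pair of frame indices $\sum_i (\cdots)^{i}(\cdots)_i$ \emph{within} the same $\nabla^{(k)}C$), then that pair cannot be an $\ell$-$n$ pair contributing $+1-1$ to two different partners; being internal it either contributes $0$ (two $m$-indices) or $+1-1=0$ or $-2$ or $+2$ to the same factor, and — crucially, using the algebraic form \eqref{typeIIIWeyl} of $C$ plus the commutativity \eqref{commut} and the Bianchi identities to normal-order the derivatives so that $D$'s act first — any such internal contraction either vanishes identically (by tracelessness of $\Omega'_{ij}$, the trace relations $\Psi'_i=\Psi'_{jij}$, or antisymmetries of $C$) or forces one of the summed indices onto a component of strictly lower boost weight than the naive count, so the total boost weight of the rank-2 tensor drops below $-2$ and it vanishes.

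The main obstacle I anticipate is the combinatorial case analysis: after commuting all derivatives past each other via \eqref{commut} and using the Bianchi identities to move $D$-derivatives to the left (so that the non-$D$ derivatives, which genuinely lower boost weight, are exposed), one must enumerate the ways an internal contraction can sit relative to (i) the two free indices of the rank-2 tensor, (ii) the four ``Weyl'' indices $abcd$ versus the derivative indices $e_1\cdots e_k$, and (iii) the external contraction indices linking the two factors. The honest work is checking that in \emph{every} such configuration the internal sum either kills the term algebraically or costs at least one unit of boost weight beyond what is available. I would organize this by first disposing of internal contractions among the four Weyl indices (these are handled purely by the symmetries and tracelessness of \eqref{typeIIIWeyl}), then internal contractions between a Weyl index and a derivative index, and finally between two derivative indices — the last case being where the balanced-scalar estimate $D^{-{\rm b}}\eta=0$ from Proposition \ref{prop-balanced}, applied to $\nabla^{(k)}C$ as a balanced tensor, does the decisive work, since a contracted derivative pair $\delta^i \delta_i$ or $\delta^i\T$ or $\delta^i D$ acting within one factor necessarily produces a genuinely balanced (b.w.\ $\le -1$) subexpression that, combined with the partner factor's b.w.\ $\le -1$, undershoots $-2$.
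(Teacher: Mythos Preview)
Your boost-weight strategy does not work here, and the paper's proof proceeds by an entirely different (and much shorter) route. The paper argues purely algebraically, by induction on the order of the internally contracted factor: (i) an internal trace on the four Weyl indices vanishes by tracelessness; (ii) an internal trace on $\nabla C$ reduces via the contracted second Bianchi identity to a trace of $C$, which vanishes in the Ricci-flat case; (iii) for $\nabla^{(2)}C$ and higher, one uses the effective commutativity \eqref{commut} (valid because the commutator terms are cubic in $C$ and hence drop out of a rank-2 quadratic expression) to push any contracted derivative index next to the Weyl tensor, after which Bianchi and step (ii) apply. No boost-weight accounting enters at all.

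The specific gap in your argument is the claim that an internal contraction ``pushes the boost weight of the resulting rank-2 tensor below $-2$''. An internal contraction $g^{ef}(\cdots)_{e}(\cdots)_{f}$ sums over $\ell$--$n$, $n$--$\ell$, and $m$--$m$ pairs, each of which contributes net boost weight $0$; it reduces the tensor rank but does \emph{not} lower the boost order. For instance, $\Box C_{abcd}$ is still a balanced tensor of boost order $-1$, not $-2$: nothing in the balanced-scalar machinery forces $\delta_i\delta_i$ or $D\T+\T D$ acting on a b.w.\ $-1$ component to produce b.w.\ $\le -2$. Your final arithmetic then collapses: two factors of b.w.\ $\le -1$ give b.w.\ $\le -2$, which is exactly the $\ell_a\ell_b$ slot a rank-2 tensor can carry, not strictly below it. The case split you outline (Weyl--Weyl, Weyl--derivative, derivative--derivative) is in fact the right organization, but in each case the vanishing comes from tracelessness, the Bianchi identities, and the effective commutation \eqref{commut}, not from boost-weight counting.
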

\begin{proof}
i) A summation within $C$ vanishes due to the tracelessness of the Weyl tensor. ii) A summation within $\nabla C$ can be (using the Bianchi identities) expressed in terms of traces of the Weyl tensor and thus it again vanishes. iii) Using the Bianchi identities \eqref{commut} and result of ii), it follows that a summation within $\nabla^{(2)} C$ vanishes. Similar argument holds for higher derivatives.   
\end{proof}

A direct consequence is 
\begin{lem}
\label{lemma_der}
For  type III, $\tau_i=0$ Ricci-flat Kundt spacetimes, let us assume that a certain rank-2 polynomial quadratic in $\nabla^{(k)} C$ vanishes. Symbolically we will write $C^{(1)} C^{(2)} =0$. Then also $C^{(1)}_{\ ;f} C^{(2)\ \! f}_{\ \  ;} =0$.
\end{lem}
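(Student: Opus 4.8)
The plan is to prove Lemma~\ref{lemma_der} by differentiating the symbolic identity $C^{(1)}C^{(2)}=0$ once along an arbitrary direction and controlling the ``extra'' terms that arise from the Leibniz rule and from the non-commutativity of the two derivative operations. Concretely, write the given vanishing rank-2 tensor as $P_{ab}=C^{(1)}_{\ \ ab\dots}C^{(2)\,\dots}_{\ \ \ \ \ }=0$, where each factor $C^{(i)}$ denotes $\nabla^{(k_i)}C$ with $k_1+k_2$ even so that $P_{ab}$ has two free indices, and where it is understood (by Lemma~\ref{lemma_insum}) that there is no summation internal to either factor — all contracted index pairs straddle the two factors. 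Apply $\nabla_f$ to $P_{ab}=0$: the Leibniz rule yields $C^{(1)}_{\ \ ;f}C^{(2)}+C^{(1)}C^{(2)}_{\ \ ;f}=0$, which after raising $f$ and contracting against a further $\nabla^f$ of the whole expression (or, more simply, after contracting the free index $f$ on the two Leibniz terms symmetrically) we want to reorganize into $C^{(1)}_{\ ;f}C^{(2)\ f}_{\ \ ;}=0$.

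The first key step is to make precise the contraction pattern being claimed. The statement $C^{(1)}_{\ ;f}C^{(2)\,f}_{\ \ ;}=0$ means: take the rank-2 tensor $P_{ab}$, differentiate each factor once and contract the two new derivative indices with each other, leaving the original two free indices $a,b$. I would obtain this by contracting $\nabla_f P_{ab}=0$ suitably — but since $\nabla_f P_{ab}$ has three free indices ($a,b,f$) and $P_{ab}$ was only rank-2, a cleaner route is: start from the already-established fact that \emph{all} rank-2 Weyl polynomials quadratic in $\nabla^{(k)}C$ are built the same way, and simply observe that $C^{(1)}_{\ ;f}C^{(2)\,f}_{\ \ ;}$ is obtained from $C^{(1)}C^{(2)}$ by the operation $\nabla_f(\,\cdot\,)\nabla^f(\,\cdot\,)$ distributed across the product. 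Expanding this operation by Leibniz gives four terms:
\[
\nabla_f\!\big(C^{(1)}\big)\nabla^f\!\big(C^{(2)}\big)
= \nabla^f\!\big(C^{(1)}\nabla_f C^{(2)}\big) - C^{(1)}\nabla^f\nabla_f C^{(2)},
\]
and similarly the companion term; adding and regrouping, $C^{(1)}_{\ ;f}C^{(2)\,f}_{\ \ ;}$ equals $\tfrac12\Box$ applied to $C^{(1)}C^{(2)}$ minus terms of the form $C^{(1)}(\Box C^{(2)})$ and $(\Box C^{(1)})C^{(2)}$ — i.e.\ $C^{(1)}_{\ ;f}C^{(2)\,f}_{\ \ ;}=\tfrac12\Box\big(C^{(1)}C^{(2)}\big)-\tfrac12 C^{(1)}(\Box C^{(2)})-\tfrac12(\Box C^{(1)})C^{(2)}$ up to commutator corrections. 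The first term vanishes because $C^{(1)}C^{(2)}=0$ by hypothesis. The remaining two terms, $C^{(1)}(\Box C^{(2)})$ and $(\Box C^{(1)})C^{(2)}$, are themselves rank-2 Weyl polynomials quadratic in covariant derivatives of the Weyl tensor, in which one factor carries two extra (summed-together) derivative indices; by the reasoning of Lemma~\ref{lemma_quadraticC} together with Lemma~\ref{lemma_insum} — or, more directly, because commuting those two extra derivatives past each other via \eqref{commut} costs only cubic-in-$C$ terms which vanish by Lemma~\ref{lemma_quadraticC}, and $\Box C^{(i)}$ is then again a covariant derivative of the Weyl tensor with an internal summation pattern — they vanish as well. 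Hence $C^{(1)}_{\ ;f}C^{(2)\,f}_{\ \ ;}=0$.

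The main obstacle is bookkeeping rather than anything conceptual: one must verify that the commutator corrections generated when moving $\nabla^f$ inside the product and past an existing derivative (so as to convert $\Box$ acting on a product into the combination above) are all cubic in the Weyl tensor — hence zero by Lemma~\ref{lemma_quadraticC} and proposition~\ref{prop-balanced} — and that $\Box C^{(i)}$, after such commutations, is genuinely of the form ``$\nabla^{(k)}C$ with a trace inside,'' so Lemma~\ref{lemma_insum} applies to $(\Box C^{(1)})C^{(2)}$ and $C^{(1)}(\Box C^{(2)})$. Care is also needed because $\Box C^{(i)}$ introduces two summed indices \emph{within} a single factor, which is exactly the situation Lemma~\ref{lemma_insum} rules out; this is why the argument must be run on $\Box$ acting on the whole product first (which vanishes outright) and only then separated. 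I would present this as: apply $\nabla^f\nabla_f$ to the identity $C^{(1)}C^{(2)}=0$, expand by Leibniz, discard the two $\Box$-on-a-single-factor terms using Lemmas~\ref{lemma_quadraticC} and \ref{lemma_insum} and \eqref{commut}, and read off the claim. The inductive passage to ``and similarly for any number of further differentiations'' is then immediate, which is presumably how the paper intends to use this lemma to push the vanishing of rank-2 Weyl polynomials up through all orders of $\nabla^{(k)}C$.
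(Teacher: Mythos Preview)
Your argument is correct and is essentially what the paper intends: the paper states Lemma~\ref{lemma_der} as ``a direct consequence'' of Lemma~\ref{lemma_insum}, and the natural reading is precisely your $\Box$-argument --- apply $\Box$ to the identity $C^{(1)}C^{(2)}=0$, expand via Leibniz, and discard the single-factor terms $(\Box C^{(1)})C^{(2)}$ and $C^{(1)}(\Box C^{(2)})$ using Lemma~\ref{lemma_insum}.

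One simplification: your hedge ``up to commutator corrections'' is unnecessary. The Leibniz rule for covariant differentiation is exact, so
\[
\Box\big(C^{(1)}C^{(2)}\big)=(\Box C^{(1)})C^{(2)}+2\,C^{(1)}_{\ ;f}C^{(2)\,;f}+C^{(1)}(\Box C^{(2)})
\]
holds identically, with no curvature terms to track. The commutator issues handled by \eqref{commut} enter only inside the proof of Lemma~\ref{lemma_insum} (to reduce an internal trace of $\nabla^{(k)}C$ to traces of $C$), not in the Leibniz expansion itself. Also, your phrase that Lemma~\ref{lemma_insum} ``rules out'' internal summations could be misread; it is exactly the lemma that \emph{kills} the two $\Box$-on-a-single-factor terms, which is what you use. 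With those cosmetic fixes the proof is clean and short.
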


From proposition \ref{prop-balanced}, we know that  $\nabla^{(k)} C$ has only terms of b.w. $\leq-1$. Now, let us look at boost weight $-1$ terms after one further differentiation, i.e., b.w. $-1$ terms in $(\nabla^{(k)} C_{abcd})_{;e}$. Such terms arise in one of the three following ways:\\
i) From \eqref{dl} - \eqref{dm} and \eqref{eqnabla} for $\tau_i=L_{1i}=\rho_{ij}=0$, we see that
  the differentiations of $\bl's$ and $\bn's$ in the frame decomposition of $\nabla^{(k)} C_{abcd}$ do not lead to b.w. $-1$ terms, while 
 the differentiation of vectors $\mbox{\boldmath{$m^{(i)}$}} $ leads to  b.w. $-1$ terms (via term ${\Mi}_{kl} m^{(k)}_{\, a} m^{(l)}_{\, e}$ in \eqref{dm}). \\
ii) b.w. -1 terms also arise from the differentiation of b.w. $- 1$ frame components (symbolically denoted as $\eta_{(-1)}$) of $\nabla^{(k)} C$ via 
${(\eta_{(-1)}) }_{;e}\ \rightarrow \ \delta_i (\eta_{(-1)})  m^{(i)}_{e} $. \\
iii) Finally b.w. -1 terms also arise from the differentiation of b.w. $- 2$ frame components $\eta_{(-2)}$ via 
${(\eta_{(-2)}) }_{;e}\ \rightarrow \ D (\eta_{(-2)})  n_{e} $.

It follows, that b.w. $-1$ terms of $(\nabla^{(k)} C_{abcd})_{;e}$ do not contain vector $\ell_e$ {  (i.e. $(\nabla^{(k)} C_{abcd})_{;e}n^e=0$)}. 

Since any rank-2 tensor constructed from the metric, the Weyl tensor and its derivatives has the form $S_{ab} = f \ell_a \ell_b$ and only b.w. $-1$ terms in $\nabla^{(k)} C$ can contribute to the result,
 it follows (employing also \eqref{commut}) that
\begin{lem}
\label{lemma_freeind}
For type III, $\tau_i=0$ Ricci-flat  Kundt spacetimes,  any rank-2 tensor constructed from the metric, the Weyl tensor and its derivatives, quadratic in $\nabla^{(k)} C$ with at least one free derivative index vanishes.  
\end{lem}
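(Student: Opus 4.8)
The plan is to prove Lemma \ref{lemma_freeind} by combining three ingredients already in place: the boost-order bound of Proposition \ref{prop-balanced}, the structural form $S_{ab}=f\,\ell_a\ell_b$ forced on any rank-2 tensor built from the metric and $\nabla^{(k)}C$, and the observation just established that the boost-weight $-1$ part of $(\nabla^{(k)}C)_{;e}$ has no $\ell_e$ component, i.e. $(\nabla^{(k)}C_{abcd})_{;e}\,n^e=0$ on its b.w.\ $-1$ piece. First I would note that a rank-2 tensor with a free derivative index is really a rank-3 object $S_{abe}$ that is totally determined (up to contractions) by pairing two copies of (derivatives of) the Weyl tensor; by Lemma \ref{lemma_quadraticC} and Lemma \ref{lemma_insum} we may assume it is quadratic in $\nabla^{(k)}C$ with no internal summations, so it is a contraction of $C^{(1)}_{\ \cdots}$ with $C^{(2)}_{\ \cdots}{}_{;e}$ (or a commuted rearrangement thereof, using \eqref{commut}), where $C^{(1)},C^{(2)}$ denote covariant derivatives of the Weyl tensor of even total order.

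The key step is the boost-weight count. Any index of $S_{abe}$ that is free must, after the null-frame decomposition, be carried by $\ell$, by $n$, or by some $m^{(i)}$. Because $S_{abe}$ is constructed only from $\nabla^{(k)}C$ and the metric, and each factor $\nabla^{(k)}C$ has all frame components of boost weight $\le -1$ (Proposition \ref{prop-balanced}), the only way a nonzero scalar coefficient survives all the contractions is if each of the two factors contributes exactly its top boost weight $-1$ and the two $\ell$-covectors in those top components are what remain uncontracted — this is exactly the mechanism already spelled out in the paragraph preceding Lemma \ref{lemma_freeind} and in Lemma \ref{lemma_quadraticC}. Hence the three surviving free indices $a,b,e$ are all $\ell$-type, i.e. $S_{abe}=f\,\ell_a\ell_b\ell_e$ for some scalar $f$. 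But an $\ell_e$ in the slot of the last (derivative) index is precisely what the preceding discussion rules out: the b.w.\ $-1$ part of $(\nabla^{(k)}C_{abcd})_{;e}$ satisfies $(\nabla^{(k)}C_{abcd})_{;e}n^e=0$, so contracting the free derivative index of $S_{abe}$ with $n^e$ gives zero; since $S_{abe}=f\ell_a\ell_b\ell_e$ and $\ell_e n^e=1$, this forces $f=0$.

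To make the induction clean I would phrase it as follows: a rank-2-plus-one-free-index tensor quadratic in $\nabla^{(k)}C$ is, after stripping internal summations (Lemma \ref{lemma_insum}) and commuting derivatives (\eqref{commut}), of the schematic shape $C^{(1)}{}_{;f}\,C^{(2)\ f}_{\ \ ;}$ with one index left free, or $C^{(1)}\,C^{(2)}{}_{;e}$; in either case Lemma \ref{lemma_der} lets us peel derivatives off a known-vanishing rank-2 identity, and the base cases (free index on $\nabla^{(2)}C$ contracted against $C$, using the Bianchi identities and tracelessness exactly as in Lemma \ref{lemma_insum}) are immediate. The main obstacle, and the only place requiring genuine care, is the bookkeeping in the boost-weight argument: one must be sure that when a free index sits on an \emph{inner} (non-last) derivative slot of one of the factors, the same ``no $\ell$ in that slot'' conclusion holds. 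This follows because the analysis (i)--(iii) preceding the lemma describes how \emph{every} additional differentiation produces b.w.\ $-1$ terms — the $\ell_e$-free conclusion is about the freshly differentiated index at each stage, so iterating it shows that no b.w.\ $-1$ component of any iterated derivative of $C$ carries $\ell$ on a derivative index; combined with the fact that the two ``original'' Weyl indices in a b.w.\ $-1$ component are the ones that may carry $\ell$ but get contracted to leave the $\ell_a\ell_b$ of $S$, the free derivative index is never $\ell$-type, contradicting $S_{abe}=f\ell_a\ell_b\ell_e$ unless $f=0$. Hence every such tensor vanishes, proving Lemma \ref{lemma_freeind}.
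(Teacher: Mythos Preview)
Your core mechanism is the right one and coincides with the paper's: the rank-2 object must be $f\,\ell_a\ell_b$, only the boost-weight $-1$ parts of each factor can contribute, and the b.w.\ $-1$ part of $(\nabla^{(k)}C)_{;e}$ carries no $\ell_e$ on the derivative slot, which (together with \eqref{commut} to move the free derivative index to the outermost position) forces $f=0$.

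However, your presentation contains a genuine misreading that damages the argument. The phrase ``rank-2 tensor with at least one free derivative index'' does \emph{not} mean a rank-3 object $S_{abe}$; it means a rank-2 tensor $S_{ab}$ in which at least one of the two free indices (say $b$) sits on a covariant derivative rather than on the Weyl tensor. Your rank-3 reformulation then leads you to the claim $S_{abe}=f\,\ell_a\ell_b\ell_e$, which is false for a general rank-3 tensor of boost order $\le -2$: such a tensor also has b.w.\ $-2$ components of the form $\ell_a\ell_b m^{(i)}_e$ (and permutations), since two factors of b.w.\ $\le -1$ can combine to b.w.\ $-2$ as well as $-3$. In the correct rank-2 reading this issue disappears, because a rank-2 tensor of boost order $\le -2$ has \emph{only} the component $\propto\ell_a\ell_b$, and the argument becomes exactly the two-line one the paper gives. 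The induction you propose via Lemma~\ref{lemma_der}, the peeling of derivatives, and the separate treatment of ``inner'' derivative slots by iterating the (i)--(iii) analysis are all unnecessary once \eqref{commut} is invoked to commute the free derivative index to the last position.
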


This immediately implies that $F_1$ and $F_2$ given in \eqref{eqF} vanish and since the remaining rank-2 polynomial in order-6 FKWC basis, $F_3$,  also vanishes due to \eqref{QGterm} and {  lemma \ref{lemma_der},} it follows that all order-6, rank-2 polynomials quadratic in $\nabla^{(k)} C$ vanish as well.

Taking into account the Weyl tensor symmetries,  lemmas \ref{lemma_insum}, \ref{lemma_der}, \ref{lemma_freeind} and eq. \eqref{commut},  it follows that an existence of a non-vanishing  rank-2 tensor constructed from the metric, the Weyl tensor and its derivatives and quadratic in $\nabla^{(k)} C$ implies an existence of such a 
non-vanishing  rank-2 tensor
constructed from at most second derivatives of the Weyl tensor. 


{ 
For  type III, $\tau_i=0$ Ricci-flat Kundt spacetimes, let us now  try to find 
a non-vanishing  rank-2  tensor $K_{ab}$ constructed from the metric, the Weyl tensor and its derivatives, quadratic in $\nabla^{(k)}C $. 

From lemma \ref{lemma_freeind}, we know that all derivative indices have to be  dummy indices and that 
(by lemma \ref{lemma_insum}) these indices cannot be summed within $\nabla^{(k)} C$. Since in the expression 
of  $K_{ab}$, the derivative indices effectively commute  \eqref{commut} we can ignore the dummy indices which are both 
derivative indices since
 an  existence of a non-vanishing  $K_{ab}$ of
the form $C^{(1)}_{\ ;f} C^{(2)}_{\ ;f}$ would imply (by lemma \ref{lemma_der}) an existence of another non-vanishing $\tilde K_{ab}$ of the form
$C^{(1)} C^{(2)}$. Since derivative indices in $C^{(1)}$ can be summed only with non-derivative indices in $C^{(2)}$ and vice-versa, it follows that 
$C^{(1)}$ and $C^{(2)}$ are at most second derivatives of  the Weyl tensor. However, as argued above, all the 6th order rank-2 tensors vanish and thus  
 $C^{(1)}$ and $C^{(2)}$ are precisely second derivatives of the Weyl tensor.

Taking into account all above mentioned constraints, we see that the only potentially non-vanishing 8th order rank 2 tensor is (modulo trivial identities) }
$$
C_{agdh;ef} C_ b^{\ edf;gh},
$$
which, however, vanishes as well due to the Bianchi identities and lemma \ref{lemma_freeind}. 

We can thus conclude this section with
\begin{prop}
\label{prop_quad}
For  type III Ricci-flat Kundt spacetimes obeying  $C_{acde} C_{b}^{\ cde}=0$ and $\tau_i=0$, all rank-2  tensors constructed from the metric, the Weyl tensor and its covariant derivatives of arbitrary order which are  quadratic or of higher order in the Weyl tensor and its derivatives vanish.
\end{prop}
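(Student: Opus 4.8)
The plan is to split the statement according to the order in the Weyl tensor and its derivatives. A rank-2 tensor of order $m\ge 3$ in the Weyl tensor and its covariant derivatives is a product of $m$ factors, each of boost order at most $-1$ by Proposition \ref{prop-balanced}, hence of boost order at most $-m\le -3$; since a rank-2 tensor cannot have boost order below $-2$, it must vanish. This is exactly Lemma \ref{lemma_quadraticC}, so the proposition reduces to showing that every rank-2 tensor quadratic in $\nabla^{(k)}C$ vanishes once \eqref{QGterm} and $\tau_i=0$ hold.

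For the quadratic case I would take a general such tensor $K_{ab}$, written as a full contraction of two blocks $C^{(1)}=\nabla^{(k_1)}C$ and $C^{(2)}=\nabla^{(k_2)}C$ (and metric factors), and strip it down with the lemmas already in hand. Metric factors only raise or lower indices, produce a trace of some $\nabla^{(k)}C$ — which vanishes by Lemma \ref{lemma_insum} — or multiply a polynomial Weyl scalar, which vanishes for type III Kundt spacetimes by Proposition \ref{prop-balanced}. By \eqref{commut} the derivative indices commute and may be reordered at will; by Lemma \ref{lemma_insum} no index pair is contracted inside a single block; by Lemma \ref{lemma_freeind} the free indices $a,b$ sit on the curvature part of the blocks, not in a derivative slot; and by Lemma \ref{lemma_der} combined with \eqref{commut} a contraction of a derivative index of $C^{(1)}$ against a derivative index of $C^{(2)}$ can always be stripped, trading $K_{ab}$ for a rank-2 tensor of strictly lower order with the same vanishing status. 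A minimal-order nonvanishing $K_{ab}$ therefore has every derivative index of one block contracted against a curvature index of the other; using the Weyl symmetries, the differential Bianchi identity $C_{ab[cd;e]}=0$ — which lets one trade a derivative slot on one block against a curvature slot, exposing a derivative-derivative contraction that Lemma \ref{lemma_der} then removes — and the contracted Bianchi identity $C_{abcd}{}^{;d}=0$ for Einstein spacetimes, one shows that any such $K_{ab}$ is reducible to one built from at most second covariant derivatives of the Weyl tensor, i.e. to the FKWC order-6 rank-2 basis $F_1,F_2,F_3$ of \eqref{eqF} together with the single order-8 polynomial $C_{agdh;ef}C_b^{\ edf;gh}$ (modulo trivial identities). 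Finally $F_1$ and $F_2$ carry free derivative indices and vanish by Lemma \ref{lemma_freeind}; $F_3$ vanishes by \eqref{QGterm} together with Lemma \ref{lemma_der}; and $C_{agdh;ef}C_b^{\ edf;gh}$ is reduced by the Bianchi identities to terms either carrying a free derivative index (killed by Lemma \ref{lemma_freeind}) or of lower order (already shown to vanish). The list being exhausted, Proposition \ref{prop_quad} follows.

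The main obstacle is precisely the combinatorial reduction in the middle paragraph: one must be sure that after exhausting all Weyl-tensor symmetries, the algebraic and differential Bianchi identities, the commutation \eqref{commut}, and the stripping operation of Lemma \ref{lemma_der}, the only quadratic rank-2 contractions not manifestly reducible to something already known to vanish are $F_1,F_2,F_3$ and $C_{agdh;ef}C_b^{\ edf;gh}$; in particular, that the Bianchi-then-strip procedure terminates (each step lowering the total order) and that no contraction pattern with large $k_1,k_2$ escapes it. All the analytic input — the boost-order bounds, the vanishing of traces of $\nabla^{(k)}C$, the behaviour of balanced tensors under differentiation — is already supplied by Proposition \ref{prop-balanced} and Lemmas \ref{lemma_quadraticC}, \ref{lemma_insum}, \ref{lemma_der}, \ref{lemma_freeind}, so the remaining work is careful index bookkeeping rather than new identities.
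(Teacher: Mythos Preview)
Your proposal is correct and follows essentially the same route as the paper: the cubic-and-higher case is disposed of by the boost-order count (Proposition~\ref{prop-balanced}/Lemma~\ref{lemma_quadraticC}), and the quadratic case is reduced, via Lemmas~\ref{lemma_insum}, \ref{lemma_der}, \ref{lemma_freeind} and the commutation~\eqref{commut}, to the order-6 basis $F_1,F_2,F_3$ and the single order-8 contraction $C_{agdh;ef}C_b^{\ edf;gh}$, each of which is then killed exactly as you describe. The paper's argument is slightly terser on the combinatorial reduction step you flag as the main obstacle, but the logic and the endgame are the same.
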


It remains to show that, under assumptions of proposition \ref{prop_quad}, rank-2 Weyl polynomials linear in $\nabla^{(k)} C$ vanish as well.

\subsubsection{Rank-2 Weyl polynomials linear in $\nabla^{(k)} C$}

From proposition \ref{prop_quad} and eq. \eqref{commut_der}, it follows that when expressing rank-2 polynomials linear in $\nabla^{(k)} C$, covariant derivatives commute.
Vanishing of rank-2 Weyl polynomials linear in $\nabla^{(k)} C$ is then a trivial consequence of the Bianchi identities and the tracelessness of the Weyl tensor.

Together with the results given in section \ref{sec_typeIIIsuff} above, this completes the proof of theorem \ref{prop_typeIII}.

\subsection{Necessary conditions}
\label{sec_necessaryIII}

In has been shown in \cite{Pravdaetal04}, that under some additional genericity assumptions, the optical matrix of type III Einstein spacetimes has the form \eqref{GStypeN}. In particular, this holds for all five-dimensional type III Einstein spacetimes and all non-twisting spacetimes in arbitrary dimensions. In fact, type III Einstein spacetimes not obeying \eqref{GStypeN} are not known.

 The simplest non-trivial curvature invariant for  type III Einstein spacetimes is  \cite{Coleyetal04vsi} 
\BE
I_{{III}} = C^{a_1 b_1 a_2 b_2;e_1} C_{a_1 c_1 a_2 c_2;e_1} C^{d_1 c_1 d_2 c_2;e_2} C_{d_1 b_1 d_2 b_2;e_2}.\label{invIII}
\EE
Assuming that eq. \eqref{GStypeN} holds,  the r-dependence of  $I_{{III}}$ was determined in eq. (97) of \cite{OrtPraPra10}.  It follows that, for non-Kundt spacetimes,
$I_{{III}}$ is clearly non-constant. 

Therefore, universal type III Einstein spacetimes obeying \eqref{GStypeN} belong to the Kundt class.


 \section{The Kundt metrics}
\label{sec_Kundt}

As follows from the results discussed above, all universal metrics discussed here (and in fact possibly all universal metrics in general) belong to the Kundt class.
{  In order to give illustrative {\em explicit} examples of universal spacetimes, let us  briefly present the relevant subclasses of Kundt metrics.}
{  Kundt metrics are} defined as spacetimes admitting a null geodetic vector field $\bl$ with vanishing shear, expansion and twist, {i.e., $\kappa_i=0=\rho_{ij}$}.
{  The Kundt vector field $\bl$ can be rescaled by a boost so that in appropriate coordinates $\bl={\rm d}u$ (c.f. \eqref{Kundt_gen})}, {  $L_{1i}=\tau_i$ in \eqref{dl} and thus} the covariant derivative  of $\bl$ simplifies to  \cite{PodOrt06,PodZof09}
\be
\ell_{a;b}=L_{11}\ell_a\ell_b+\tau_i(\ell_a m^{(i)}_b+m^{(i)}_a\ell_b). 
\ee

The  type N Ricci-flat Kundt class splits into two subclasses characterized by vanishing/non-vanishing $\tau_i$ 
{  (see \cite{OrtPraPra12rev,Coleyetal06,PodZof09,Coleyetal03})}. It can be shown that for  $\tau_i=0$ class, {  from the Ricci identities \eqref{Ricci_DL},\eqref{Ricci_TL}, it follows that $L_{11}$ depends only on the coordinate $u$} and one can rescale $\bl$ {  by a $u$-dependent boost} to set $L_{11}=0$. This class thus represents type N Ricci-flat pp waves. 

The type III 
Ricci-flat  Kundt class  splits again in two subclasses characterized by vanishing/non-vanishing $\tau_i$. 
In the $\tau_i=0$ class,  representing recurrent spacetimes,  {  the Ricci identities \eqref{Ricci_DL}, \eqref{Ricci_TL}} imply that {  $L_{11}=L_{11}(u)$ and therefore $L_{11}$} { can} be set to zero {  by a  $u$-dependent boost} iff ${\Psi'}_i=0$. 
Thus, for $\tau_i=0={\Psi'}_i$, this class represents type III Ricci-flat pp waves.

The condition \eqref{QGterm}  is an identity for type N and III spacetimes in four dimensions.
Furthermore, if the condition \eqref{QGterm} holds for an n-dimensional spacetime it is also valid for $n+1$ dimensional spacetime obtained by warping the original metric
\cite{OrtPraPra11}.
Since also an algebraic type of an algebraically special spacetime is preserved under warp \cite{OrtPraPra11}, one can easily construct higher dimensional type III Kundt spacetimes
 obeying  \eqref{QGterm} by warping four-dimensional type III Kundt spacetimes.
 For $n>4$ type III, the condition \eqref{QGterm}  can be expressed as  \cite{MalekPravdaQG}
\be
C_{acde} C_{b}^{\ cde} = \tilde \Psi  \ell_a \ell_b = 0,
 \label{QG_III}
\ee 
where
\be
\tilde \Psi \equiv  \frac{1}{2} \Psi'_{ijk} \Psi'_{ijk} -  \Psi'_i \Psi'_i.
 \label{tPsi}
\ee
It is clear that, for type III pp waves, $\tilde \Psi \not= 0$  (since $\Psi'_i=0$, see \eqref{Ricci_TL}, while $\Psi'_{ijk} \not=0$) and thus these spacetimes are not universal.

Note that, for type III and N Einstein Kundt spacetimes with $\tau_i=0$, cosmological constant $\Lambda$ necessarily vanishes due to the Ricci identity 
\eqref{Ricci_11i}, while the $\tau_i \not= 0$ class allows for non-vanishing $\Lambda$.

For the Kundt class, one can introduce coordinates such that $\bl=\partial_r$ and $\ell_a\d x^a=\d u$ and the metric takes the form \cite{Coleyetal03,ColHerPel06} (see also more recent papers studying the Kundt class \cite{Coleyetal09,PodZof09,PodolskySvarc2012})
\be
 \d s^2 =2\d u\left[\d r+H(u,r,x^{\gamma})\d u+W_\alpha(u,r,x^{\gamma})\d x^\alpha\right]+ g_{\alpha\beta}(u,x^{\gamma}) \d x^\alpha\d x^\beta , \label{Kundt_gen}
\ee
where {  the coordinates are $\{ u, r, x^\alpha\}$, with $\alpha, \beta, \gamma =2 \dots n-1$.} 
The remaining frame vectors can be chosen {  in the following way} $n_a\d x^a=\d r+H\d u+W_\alpha\d x^\alpha$ and $m^{(i)}_a\d x^a=e^{(i)}_\alpha\d x^\alpha$ with 
$g_{\alpha\beta}= \delta_{ij}e^{(i)}_\alpha e^{(j)}_\beta$. It follows that $\ell_{a;b}=\frac{1}{2} g_{ab,r}$, $L_{11}=H_{,r}$ and $\tau_i=\frac{1}{2} W_{\alpha},_{r} e^{\alpha}_{(i)}=\frac{1}{2} W_{i},_{r}$. Obviously, for $g_{ab}$ independent on $r$, $\bl$ is a CCNV.

Since, by theorem \ref{prop_univCSI}, universality implies CSI, we can restrict ourselves to the {\em Kundt CSI metrics}, where  \cite{ColHerPel06,Coleyetal09}
 \BEA
W_{\alpha}(u,r,x^{\gamma})&=&rW^{(1)}_\alpha (u,x^{\gamma}) + W^{(0)}_\alpha (u,x^{\gamma}), \nonumber \\
H(u,r,x^{\gamma})& = &\frac{r^2}{8}\left(a+W^{(1)}_\alpha W^{(1)\alpha}\right)+ r H^{(1)} (u,x^{\gamma}) + H^{(0)} (u,x^{\gamma}), \label{CSI_Kundt_2}
\EEA
$g_{\alpha\beta}(x^{\gamma})$ (note that $g_{\alpha\beta,u}=0$) is {  the metric of} a locally homogeneous space and $a$ is a constant. Note that \eqref{CSI_Kundt_2} are necessary but not sufficient conditions for Kundt CSI. Let us now further specialize to the Ricci-flat case.

\subsection{Ricci-flat type III and N Kundt spacetimes}

{  A well known subclass of Kundt spacetimes are VSI spacetimes (spacetimes with all curvature invariants vanishing) of \cite{Coleyetal04vsi}.
Although the VSI property is not necessary nor sufficient condition for universality  all type N Ricci-flat VSI spacetimes  and $\tau_i=0$, type III Ricci-flat VSI spacetimes are universal.}
{  Therefore, here, we briefly  present  the VSI metrics, studied in detail in \cite{Coleyetal06}. For the VSI spacetimes, the metric \eqref{Kundt_gen} reduces to} 
\be
 \d s^2 =2\d u\left[\d r+H(u,r,x^{\gamma})\d u+W_\alpha(u,r,x^{\gamma})\d x^\alpha\right]+ \delta_{\alpha\beta} \d x^\alpha\d x^\beta , \label{Kundt_VSI}
\ee
with
\BEA
W_2 &=& - \frac{2 \epsilon r}{x^2}, \label{VSIW2} \\ 
W_{M}(u,r,x^{\gamma})&=& W^{(0)}_M (u,x^{\gamma}), \label{VSIWM} \\
H(u,r,x^{\gamma})& = &\frac{\epsilon r^2}{2 (x^2)^2}+ r H^{(1)} (u,x^{\gamma}) + H^{(0)} (u,x^{\gamma}) \qquad (\epsilon=0,\ 1), \label{VSIH}
\EEA
which for type N, reduce to
\BEA
W_2 &=& - \frac{2 \epsilon r}{x^2}, \label{typeNW2} \\ 
W_M &=& x^N B_{NM}(u)+C_M(u)[x^2+\epsilon (1-x^2)],\\
H   &=& \frac{\epsilon r^2}{2 (x^2)^2}+  H^{(0)} (u,x^{\gamma}) \label{typeNH},
\EEA
where $B_{NM}=B_{[NM]}$,  $M,N=3 \dots n-1$, $\epsilon = 0 $ or 1 and the $\tau_i = 0$ case corresponds to $\epsilon = 0 $.

Universal metrics  of \cite{HorSte90}  are of type N with $\tau_i=0$ and they
 correspond to the subset of  metrics \eqref{Kundt_VSI}, \eqref{typeNW2} - \eqref{typeNH} with $\epsilon=0$, $B_{NM}(u)=0$, $C_M(u)=0$. 
{  However, note that in four dimensions, the class of metrics considered in \cite{HorSte90}  and the $\epsilon=0$ subset of \eqref{Kundt_VSI}, \eqref{typeNW2} - \eqref{typeNH}  coincide since  for $n=4$ and $\epsilon=0$, $B_{NM}(u)$ vanishes and $C_M(u)$ can be transformed away.}

Further conditions follow from the Ricci-flat condition (see \cite{Coleyetal06}). 

\subsection{Explicit examples of universal metrics}

{  Now, let us present a few explicit examples of universal metrics of type N and III.}

\subsubsection{Type N Einstein Kundt}
In four dimensions, after a coordinate transformation of the form $r=v Q^2/P^2$, all type N Einstein  Kundt metrics can be  expressed as \cite{Ozsvathetal85,BiPo98,Gripobook}
\BE
 \d s^2 = 2 \frac{Q^2}{P^2} \d u \d v + \left( 2 k \frac{Q^2}{P^2} v^2  +  \frac{\left(Q^2\right)_{,u}}{P^2} v - \frac{Q}{P} H \right)  \d u^2
  + \frac{1}{P^2} \left( \d x^2 + \d y^2 \right)
  , \label{KN}
\EE
where 
\BEA 
P=1+\frac{ \Lambda}{12} (x^2+y^2) , \quad
k=\frac{  \Lambda}{6} \alpha(u)^2+\frac{1}{2} \left(\beta(u)^2+\gamma(u)^2 \right) , \nonumber \\
Q=\left(1-\frac{  \Lambda}{12} (x^2+y^2) \right) \alpha(u)+\beta(u) x+\gamma(u) y , \ \ 
H = 2 f_{1,x} - \frac{  \Lambda}{3 P} (x f_1 + y f_2) ,\nonumber
\EEA 
where $\alpha(u), \beta(u), \gamma(u)$ are free functions (see \cite{Gripobook} for canonical forms) and 
$f_1=f_1(u,x,y)$ and $f_2=f_2(u,x,y)$ obey $f_{1,x}=f_{2,y}$,  $f_{1,y}=-f_{2,x}$. 

Higher dimensional type N Einstein Kundt spacetimes can be generated by warping the metric \eqref{KN}. Note that, in general, singularities appear as a consequence of the warped product unless cosmological constants of both ($n$ and $n+1$ dimensional) solutions are both negative or both zero (see \cite{OrtPraPra11} for more details). 

Other examples in this class are (A)dS-waves of various kinds. For example, 
\be
 \d s^2=e^{-pw}\left(2\d u\d v+H(u,w,x^M)\d u^2+\delta_{MN}\d x^M \d x^N \right)+\d w^2, 
\ee
with $p$ being a constant and $H$ obeying $H,_{KL}\delta^{KL}+\left(H,_{ww}-\frac{n-1}{2}pH,_w \right) {\rm e}^{-pw}=0$
and 
\be
  \d s^2=\sinh^2(p w)\left[2\d u\d v+\left(v^2p^2+H(u,w,x^M)\right)\d u^2\right]+\d w^2+\frac 1{p^2}\cosh^2(p w)\d S^2_{{\mathbb H}^d}, 
\ee
where $\d S^2_{{\mathbb H}^d} = \Omega^{-2}\delta_{MN}\d x^M\d x^N$ ($\Omega=1-\frac{1}{4}\delta_{KL}x^Kx^L$)
is the `unit' metric on the $d$-dimensional hyperbolic space, and $H$ satisfying 
\be
p^2\Omega\left[ 2\Omega H,_{MN}\delta^{MN}+(d-2)H,_{M}x^M\right]
+2 c^2\left[ H,_{ww}+pH,_w \left( 2\frac{c}{s}+d\frac{s}{c}\right)\right]=0,
\ee
where $c=\ch$ and $s=\sh$.

\subsubsection{Type III, $\tau_i=0$, Ricci-flat Kundt spacetimes}

An explicit example of a four-dimensional type III, $\tau_i=0$ Ricci-flat Kundt universal metric (expressed in other coordinates) is \cite{petrovart1}
\be
\d s^2 = 2 \d u \d v - x (v + \mathrm{e}^{x}) \d u^2  + \mathrm{e}^{x} (\d x^2 + \mathrm{e}^{-2u} \d y^2). \label{petrovmetric}
\ee
One can obtain a higher dimensional type III universal metric as a direct product of \eqref{petrovmetric} with extra flat dimensions
since the algebraic type and the conditions \eqref{QGterm} and $\tau_i=0$ will be preserved.

\section{Conclusion}
\label{sec_concl}

Universal spacetimes are vacuum solutions to {all} theories with {  the Lagrangian depending on } 
the metric, the Riemann tensor and its derivatives of arbitrary order. 
In this work, we  study necessary and sufficient conditions for such spacetimes of arbitrary dimension and relate universal spacetimes to other known classes of metrics.
First, theorem \ref{prop_univCSI} states   that the  necessary (but not sufficient) condition  for universality of a spacetime is to be a CSI spacetime.

Then, we focus on type N and III spacetimes.
 For type N spacetimes, 
we find  simple necessary and sufficient conditions for universality. In this case, theorem \ref{prop_typeN} states that a type N spacetime is universal if and only if it is an Einstein Kundt spacetime. This class of spacetimes consists of pp waves admitting CCNV $(\tau_i =0,\Lambda=0)$ and Kundt waves $(\tau_i \not= 0, \Lambda$ arbitrary$)$. 
A four-dimensional metric for these spacetimes is given in eq. \eqref{KN}. In higher dimensions, a metric has the form \eqref{Kundt_gen}, which simplifies to \eqref{Kundt_VSI}, \eqref{typeNW2}-\eqref{typeNH}
 in the Ricci-flat case.

For type III, conditions for universality are more complicated. We  show (theorem \ref{prop_typeIII}) that  type III, $\tau_i =0$ Einstein  Kundt spacetimes obeying  \eqref{QGterm} 
are universal.  The condition \eqref{QGterm} 
is a necessary condition for universality of a type III spacetime.   The question whether the Kundt condition is also a necessary condition for universality
in the type III case is answered only under some genericity assumptions.  
Although type III,  $\tau_i \not=0$ Einstein Kundt spacetimes
are not in general universal,  necessity of $\tau_i =0$ condition also remains  open.

For dimensions $n>4$,  the type III, $\tau_i =0$ Einstein Kundt class contains also type III pp waves. Interestingly, these pp waves are {\em not} universal. Instead, type III universal spacetimes {  discussed above} admit a recurrent null vector.  
The metric of type III universal spacetimes has the form \eqref{Kundt_VSI} - \eqref{VSIH} with $\epsilon=0$. 

Although we have not discussed universal spacetimes of type II here, there are a few examples of universal spacetimes known already \cite{Coleyetal08}; for example, 
{  a Kundt metric of the form} 
\beq
\d s^2=2\d u\d v+\left(-v^2\lambda+H(u,x,y)\right)\d u^2+\frac {1}{\lambda}(\d x^2+\sinh^2 x\d y^2), \quad \Box H=0.
\eeq
It is also believed that {  all} type II universal  spacetimes are Kundt. We  leave the investigation of these spacetimes for a future work.

\section*{Acknowledgments}
V.P. and A.P. acknowledge support from research plan RVO: 67985840 and research
grant GA\v CR 13-10042S. We are grateful to Marcello Ortaggio and Lode Wylleman for useful comments on the draft.

\appendix

\section{The Ricci and Bianchi equations and commutators for type III and N  Einstein Kundt spacetimes in a parallelly propagated frame.}
\label{App:NPeqs}

Throughout the paper, we repeatedly use commutators and a subset of the Ricci and Bianchi equations for type III and N  Einstein Kundt spacetimes in a parallelly propagated frame. For convenience, we list these equations here. The original, more general form can be found in 
\cite{Pravdaetal04, Coleyetal04,OrtPraPra07,Durkeeetal10}.   

The Ricci equations \cite{OrtPraPra07,Durkeeetal10}:
\BEA
DL_{11} & =&  - L_{1i} \tau_{i}  +{\cal R} ,\label{Ricci_DL}\\
 DL_{1i} & = &0 , \\
\bigtriangleup L_{1i} - \delta_i L_{11}& = &   L_{11}(L_{1i}- \tau_{i}) -  \tau_{j} N_{ji} -L_{1j} (N_{ji}+\M{j}{i}{1}) + \Psi'_i , \label{Ricci_TL}\\
 D \tau_{i} & = &0 , \\
 D \kappa'_i 
&=& - \rho'_{ij} \tau_{j} + \Psi'_i  , \\
 - \delta_j \tau_{i} &=&  - \tau_{i} \tau_{j}   +  \tau_{k} \M{k}{i}{j} 
-{\cal R}\delta_{ij},  \label{Ricci_11i}\\
  D \rho'_{ij} 
	& = & -{\cal R}\delta_{ij}	, \\
  D \M{i}{j}{1} & =& - \M{i}{j}{k} \tau_{k},  \\
  D \M{i}{j}{k} &=&  0 ,   \\
\EEA
where 
\be
{\cal R}\delta_{ij}=\textstyle{\frac{1}{n-2}} (R_{ij}+R_{01} \delta_{ij})  - \textstyle{\frac{1}{(n-1)(n-2)}} R \delta_{ij}=\frac{R}{n(n-1)} \delta_{ij}\nonumber
\ee
for Einstein spaces $R_{ab}=(R/n) g_{ab}$.

The Bianchi equations \cite{Pravdaetal04,Durkeeetal10}:

\BEA
D \Psi'_i&=&0,\\
D\Psi'_{ijk}&=&0,\\
D \Omega'_{ij}&=&\delta_j\Psi'_i+\Psi'_i(L_{1j}-\tau_j)+\Psi'_{ijs}\tau_{s}+\Psi'_s \M{s}{i}{j} .
\EEA

The commutators \cite{Coleyetal04}:
\BEA
\T D - D \T &=& L_{11} D + \tau_i 
\delta_i,  \label{comTD} \\
\delta_i D - D \delta_i &=& L_{1i} D , \label{comdD} \\
\delta_i \T - \T \delta_i  &=& \kappa'_{i} 
D + (\tau_i 
-L_{1i}) \T + (\rho'_{ji} 
-\Mi_{j1}) \delta_j ,\\
\delta_i \delta_j - \delta_j \delta_i &=& (\rho'_{ij} 
-\rho'_{ji} 
) D 
 + (\Mk_{ij}-\Mk_{ji})\delta_k.
\EEA


\end{document}